\DeclareMathOperator*{\argmin}{\arg\!\min}
\newcommand{\Mod}[1]{(\mathrm{mod}\, #1)}
\newcommand{\Bb}{\mathcal{B}}
\newcommand{\Cc}{\mathcal{C}}
\newcommand{\Nn}{\mathcal{N}}
\newcommand{\Ww}{\mathcal{W}}
\newcommand{\Uu}{\mathtt{U}}
\newcommand{\ef}{\boldsymbol{e}}
\newcommand{\Bf}{\boldsymbol{B}}
\newcommand{\Ef}{\boldsymbol{E}}
\newcommand{\Wf}{\boldsymbol{W}}
\newcommand{\xx}{\boldsymbol{x}}
\newcommand{\EE}{\mathbb{E}}
\newcommand{\RR}{\mathbb{R}}
\newcommand{\D}{\mathrm{d}}
\newcommand{\E}{\mathrm{e}}
\newcommand{\I}{\mathrm{i}}
\newcommand{\Id}{\mathrm{I}}
\newcommand{\ub}{\mathfrak{u}}
\newcommand{\half}{\frac{1}{2}}
\newcommand{\dt}{\partial_{t}}
\newcommand{\Oneovd}{\frac{1}{d}}
\newcommand{\Tr}{\mathrm{T}}
\newcommand{\Ug}{\mathtt{U}}
\newcommand{\Ag}{\mathtt{A}}
\newcommand{\Bg}{\mathtt{B}}
\newcommand{\Cg}{\mathtt{C}}
\newcommand{\Gg}{\mathtt{G}}
\newcommand{\Mg}{\mathtt{M}}
\newcommand{\Rg}{\mathtt{R}}
\newcommand{\Xg}{\mathtt{X}}
\newcommand{\Zg}{\mathtt{Z}}
\newcommand{\Cnotg}{\mathtt{C}_{\mathtt{NOT}}}
\theoremstyle{plain}%
\newtheorem{theorem}{Theorem}
\numberwithin{theorem}{subsection}
\newtheorem{lemma}[theorem]{Lemma}
\newtheorem{remark}[theorem]{Remark}
\newtheorem{definition}[theorem]{Definition}
\numberwithin{figure}{section}
\numberwithin{equation}{section}
\DeclareMathOperator{\acos}{\mathrm{acos}}
\numberwithin{equation}{section}
\begin{document}

\title[Unsupervised Random Quantum Networks for PDEs]{Unsupervised Random Quantum Networks for PDEs}

\author*[1,2]{\fnm{Josh} \sur{Dees}}\email{jd1622@ic.ac.uk}

\author[2]{\fnm{Antoine} \sur{Jacquier}}\email{a.jacquier@imperial.ac.uk}

\author[1]{\fnm{Sylvain} \sur{Laizet}}\email{s.laizet@imperial.ac.uk}

\affil*[1]{\orgdiv{Department of Aeronautics}, \orgname{Imperial College London}, \orgaddress{\street{South Kensington Campus}, \city{London}, \postcode{SW7 2AZ}, \country{United Kingdom}}}

\affil[2]{\orgdiv{Department of Mathematics}, \orgname{Imperial College London}, \orgaddress{\street{South Kensington Campus}, \city{London}, \postcode{SW7 2AZ}, \country{United Kingdom}}}

\abstract{

Classical Physics-informed neural networks (PINNs) approximate solutions to PDEs with the help of deep neural networks trained to satisfy the differential operator and the relevant boundary conditions. 
We revisit this idea in the quantum computing realm, using  parameterised random quantum circuits as trial solutions.
We further adapt recent PINN-based techniques to our quantum setting, in particular Gaussian smoothing.
Our analysis concentrates on the Poisson, the Heat and the  Hamilton-Jacobi-Bellman equations, 
which are ubiquitous in most areas of science.
On the theoretical side, we develop a complexity analysis of this approach, 
and show numerically that random quantum networks can 
outperform more traditional quantum networks as well as random classical networks. 
}

\keywords{Physics Informed Neural Networks, Partial Differential Equations, Random Quantum Networks, Gaussian smoothing}
\date{\today}
%%\pacs[JEL Classification]{D8, H51}
\pacs[MSC Classification]{81P68, 65M99, 65Y20, 35J05}

\maketitle
\newpage
\newpage

%%%%%%%%%%%%%%%%%%%%%%%%%%%%%%%%%%%%%%%%%%%%
\section{Introduction}\label{sec1}
Partial Differential Equations (PDEs) describe continuous phenomena, such as the fluid flows or the propagation of waves. 
They correspond to the mathematical translation of observable, physical, chemical or biological processes. Unfortunately, these equations rarely admit analytical solutions, 
and need to be be discretised on some mesh. 
This process can be computationally expensive, especially for high-dimensional problems and when unstructured meshes are required, for example to account for local irregular behaviours. 
This discretised scheme can then be solved using a variety of numerical methods, 
such as finite elements (FEM), finite differences (FDM) or  the finite volume (FVM). 
However, even these methods can be inefficient for large and complex problems. 
For example, the solution of the Navier-Stokes equations, describing the motions of a fluid, can require millions of hours of CPU or GPU time on a supercomputer. 
Another example is the Poisson equation, one of the most important PDEs in engineering, 
including heat conduction, gravitation, and electrodynamics. 
Solving it numerically in high dimension is only tractable with iterative methods, 
which often do not scale well with dimension and/or require specialist knowledge 
when dealing with boundary conditions or when generating the discretisation mesh. 

Neural networks (NNs) are well positioned to solve such complicated PDEs and are already used in various areas of engineering and applied mathematics for complex regression and image-to-image translation tasks. 
The scientific computing community has applied them PDE solving as early as the 1980s~\cite{LEE1990110}, yet interest has exploded over recent years, due in part to significant improvements in computational techniques and improvements in the formulation of such networks, as detailed and highlighted for example in~\cite{lu2021deepxde,cuomo2022scientific,wenshu2022review}.

Quantum computing is a transformative new paradigm which takes advantage of the quantum phenomena seen at the microscopic physical scale. 
While significantly more challenging to design, quantum computers can run specialised algorithms that scale better than their classical counterparts, sometimes exponentially faster. 
Quantum computers are made of quantum bits (or qubits) that, unlike bits in conventional digital computers, store and process data based on two key principles of quantum physics: quantum superposition and quantum entanglement. 
They characteristically suffer from specific errors, namely quantum errors, which are related to the quantum nature of their qubits. 
Even if quantum computers of sufficient complexity are not yet available, there is a clear need to understand what tasks we can hope to perform thereon and to design methods to mitigate the effects of quantum errors~\cite{preskill2023quantum}.

Quantum neural networks form a new class of machine learning networks and leverage quantum mechanical principles such as superposition and entanglement with the potential to deal with complicated problems and/or high-dimensional spaces.
Suggested architectures for quantum neural networks include~\cite{Zoufal_2019,dunjko2018machine,gonon2023universal} 
and suggest that there might be potential advantages, including faster training. 
Preliminary theoretical research into quantum machine learning shows that quantum networks can produce a more trainable model~\cite{Abbas_2021}. 
This is particularly relevant to solving PDEs with machine learning as techniques which produce a more favourable loss landscape can drastically improve the performance of these models~\cite{krishnapriyan2021characterizing,gopakumar2023loss}.

In the present work, we suggest a new way of formulating a quantum neural network, translate some classical machine learning techniques to the quantum setting and develop a complexity analysis 
in the context of specific PDEs 
(the Heat, the Poisson and an HJB equation).
This provides a framework to demonstrate the potential and the versatility of quantum neural networks as PDE solvers. 

The paper is organised as follows: 
Section~\ref{sec:Tools} introduces the PINN algorithm and reviews the basics of classical and quantum networks.
In Section~\ref{sec:RNN}, 
we introduce a novel quantum network
to solve  specific PDEs and provide a complexity analysis.
Finally, we confirm the quality of the scheme numerically in Section~\ref{sec:Numerics}.\\

\textbf{Notations.}
We denote by $\Cc^{n}(\Omega,\RR)$ the space of~$n$ times differentiable functions from~$\Omega$ to~$\RR$, by~$L^p(\Omega)$ the space of functions with finite~$L^p$ norm and define the Sobolev space 
$\Ww^{\alpha,\beta}(\Omega): = \left\{f \in L^\beta(\Omega): \nabla^s f \in L^\beta(\Omega) \text{ for all }|s| \leq \alpha\right\}$,
where $\nabla^s f$ refers to the weak derivative of order~$s$.
Similarly, we let~$\Ww_0^{\alpha,\beta}$ be the subspace of functions in~$\Ww^{\alpha,\beta}(\Omega)$ that vanish in the trace sense on the boundary of~$\Omega$. 
We use $\|\cdot\|$ to refer to the Euclidean norm. 

%%%%%%%%%%%%%%%%%%%%%%%%%%%%%%%%%%%%%%%%%%%%
\section{Main tools}\label{sec:Tools}
%%%%%%%%%%%%%%%%%%%%%%%%%%%%%%%%%%%%%%%%%%%%
\subsection{PINN algorithm}\label{Pinnalgo}
The Physics-informed neural network (PINN) algorithm relies on the expressive power of neural networks to solve PDEs. 
Let $\Omega \subset \RR^{d}$ be a bounded Lipschitz domain with boundary~$\partial\Omega$,  $\mathscr{F}: \Cc^{K}(\Omega,\RR) \to \Cc(\Omega,\RR)$ a differential operator of order at most~$K$, $E \subset \partial \Omega$ and $h: E \to \RR$. 
The goal is to estimate the solution $u: \Omega \to \RR $ to the PDE
\begin{align}\label{eq:PDE}
 \begin{cases}
 \mathscr{F}\left(u\right)(x) = 0, & \text{for all } x\in \Omega, \\ 
 u(x)=h(x), & \text{for all } x\in E.
 \end{cases}
\end{align}
Let $u_\Theta : \Omega \to \RR$ be a neural network at least~$K$ times continuously differentiable parameterised by some set~$\Theta$.
%aimed at approximating the solution to~\eqref{eq:PDE}. 
We assume access to two datasets: 
independent and identically distributed (iid) random vectors $\{X^{(e)}_i\}_{i=1,\ldots,n_e}$ with known distribution~$\mu_E$ on~$E$ and iid random vectors $\{X^{(r)}_i\}_{i=1,\ldots, n_r}$ with known distribution~$\mu_\Omega$ on~$\Omega$. 
We then minimise the empirical loss function
 
\begin{equation}\label{eq:EmpLossFunction}
    \mathcal{L}_{n_e,n_r}(u_\Theta) := \frac{\lambda_e}{n_e}\sum_{i=1}^{n_e}\left|
    u_\Theta\left(X^{(e)}_i\right) - h\left(X^{(e)}_i\right)\right|
    + \frac{1}{n_r} \sum_{i = 1}^{n_r}
    \left|\mathscr{F}\left(u_\Theta \right)\left(X^{(r)}_i\right)\right|, 
\end{equation}
over the set~$\Theta$ using a hybrid quantum-classical training loop, where $\lambda_e >0$ is a hyper-parameter 
allowing one to balance the two loss components. 
This training loop evaluates all~$u_\Theta (\cdot)$ values on a quantum computer before feeding the values to a classical computer for use in classical optimisation routines. 
As shown in~\cite{doumeche2023convergence} (Proposition 3.2 and the associated discussion) minimising~\eqref{eq:EmpLossFunction} does not necessarily imply anything about the mean squared error $\lvert \lvert u - u_\Theta \rvert \rvert^2_2$. 
%Importantly, the authors proved that this is true for many PDEs relevant in Physics, including the Navier-Stokes, the Schr\"{o}dinger and the heat equations.

%%%%%%%%%%%%%%%%%%%%%%%%%%%%%%%%%%%%%%%%%%%%

%%%%%%%%%%%%%%%%%%%%%%%%%%%%%%%%%%%%%%%%%%%%
\subsection{Gaussian smoothing}
In~\cite{he2023learning}, the authors investigated Gaussian smoothing the output of a classical neural network for use as a PDE trial solution, 
providing a simpler expression (as an expectation) for all derivatives. 
Consider indeed a trial solution of the form
\begin{equation}\label{smoothing1}
f(x) := \EE_{\delta \sim \Nn(0,\Id\sigma^{2})}[u(x+ \delta)],
\qquad\text{for all }x\in\RR^d,
\end{equation}
for some $\sigma>0$, where~$u$ is the output of a neural network.
Then, assuming~$u$ measurable, 
all derivatives can then be written as follows:

\begin{theorem}[{\cite[Theorem 1]{he2023learning}}]\label{thm:delg}
For any measurable function $u: \RR^{d} \to \RR$, the function~$f$ defined in~\eqref{smoothing1} is differentiable and the following holds for all $x\in\RR^d$:
\begin{equation}
\nabla f(x) = \frac{1}{\sigma ^2} \EE_{\delta \sim \Nn(0,\Id\sigma^{2})} \left[ \delta u(x+ \delta)\right].
\end{equation}
\end{theorem}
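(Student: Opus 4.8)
The plan is to exploit the fact that~$f$ is nothing but the convolution of~$u$ against the Gaussian density, so that all the regularity of~$f$ is inherited from the smooth kernel rather than from~$u$ itself. Writing $\phi_\sigma(z) := (2\pi\sigma^2)^{-d/2}\exp\left(-\|z\|^2/(2\sigma^2)\right)$ for the density of $\Nn(0,\Id\sigma^2)$, definition~\eqref{smoothing1} reads
\begin{equation}
f(x) = \int_{\RR^d} u(x+\delta)\,\phi_\sigma(\delta)\,\D\delta
     = \int_{\RR^d} u(y)\,\phi_\sigma(y-x)\,\D y,
\end{equation}
where the second equality follows from the change of variables $y = x+\delta$ (whose Jacobian is~$1$). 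The point of this rewriting is that the dependence on~$x$ now sits entirely in the smooth factor $\phi_\sigma(y-x)$, while~$u$ has been frozen.

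Next I would differentiate under the integral sign with respect to~$x$. A direct computation gives
\begin{equation}
\nabla_x \phi_\sigma(y-x) = \frac{y-x}{\sigma^2}\,\phi_\sigma(y-x),
\end{equation}
since differentiating the exponent $-\|y-x\|^2/(2\sigma^2)$ brings down exactly the factor $(y-x)/\sigma^2$. Granting the interchange of gradient and integral, this yields
\begin{equation}
\nabla f(x)
= \frac{1}{\sigma^2}\int_{\RR^d} (y-x)\,u(y)\,\phi_\sigma(y-x)\,\D y
= \frac{1}{\sigma^2}\int_{\RR^d} \delta\,u(x+\delta)\,\phi_\sigma(\delta)\,\D\delta,
\end{equation}
where the last step undoes the earlier substitution $\delta = y-x$. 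Recognising the final integral as an expectation over $\delta\sim\Nn(0,\Id\sigma^2)$ gives precisely the claimed identity.

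The main obstacle, and the only step requiring genuine care, is justifying the differentiation under the integral sign when~$u$ is merely measurable. I would invoke the standard Leibniz rule for parameter-dependent integrals, which requires a locally integrable dominating bound for the difference quotients of $x\mapsto u(y)\phi_\sigma(y-x)$ that is uniform for~$x$ in a neighbourhood of the evaluation point. The key observation is that $\phi_\sigma$ and its partial derivatives decay faster than any polynomial, so that $\nabla_x\phi_\sigma(y-x)$ is controlled by $C\,(1+\|y\|)\,\phi_{\sigma'}(y-x)$ for a slightly larger bandwidth $\sigma' > \sigma$; provided~$u$ has at most sub-Gaussian growth (implicitly needed for the expectations in~\eqref{smoothing1} to be finite in the first place), this bound is integrable and dominated convergence applies on each compact set of~$x$. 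This simultaneously establishes differentiability of~$f$ and legitimises the interchange, completing the argument.
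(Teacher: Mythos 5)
The paper never proves this statement: it is imported wholesale from~\cite[Theorem~1]{he2023learning}, so the only meaningful comparison is with the proof in that reference, and your argument is essentially that proof --- rewrite $f = \EE[u(x+\delta)]$ as the convolution $\int_{\RR^d} u(y)\,\phi_\sigma(y-x)\,\D y$, observe that all the $x$-dependence sits in the smooth Gaussian kernel with $\nabla_x \phi_\sigma(y-x) = \frac{y-x}{\sigma^2}\phi_\sigma(y-x)$, and differentiate under the integral sign. Your execution is sound, and you have in fact caught a genuine imprecision in the statement as quoted: bare measurability of~$u$ is not enough, since for, say, $u(y) = \E^{\|y\|^2/\sigma^2}$ the defining expectation is already infinite, so some growth restriction (your sub-Gaussian condition, or boundedness) is implicitly required both for~$f$ to exist and for the domination step. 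In the paper's actual use case this is automatic: $u$ is the expectation of the bounded Hermitian cost operator~\eqref{eq:IsingHamilt}, with $\sup_x|u(x)| \leq 3n$ (as computed in Section~\ref{sec:ExpreSmooth}), so the difference quotients are dominated locally uniformly by $C\,\phi_{\sigma'}(y-x)$ with $\sigma' > \sigma$ and no $(1+\|y\|)$ factor is even needed. One small tightening: convergence of the difference quotients only yields existence of the partial derivatives; to conclude (total) differentiability of~$f$ you should add that the same dominated-convergence bound shows each partial derivative is continuous in~$x$, whence $f \in \Cc^{1}(\RR^d,\RR)$ (indeed $\Cc^\infty$, consistent with the recursive higher-derivative formulas the paper uses later).
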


Theorem~\ref{thm:delg} implies that an (unbiased) estimate for the gradient can be computed for example by Monte Carlo, for example as 
\begin{equation}
\widetilde{\nabla f}(x)
:= \frac{1}{K} \sum_{k=1}^{K} \frac{\delta_k}{\sigma ^2}u(x+ \delta_k),
\end{equation}
using~$K$ iid Gaussian $\Nn(0,\Id\sigma^{2})$ samples $(\delta_k)_{k=1,\dots,K}$. 
This can be improved using a combination of antithetic variable and control variate techniques (see for example~\cite[Chapter~4]{glasserman2004monte} for a thorough overview) resulting in the new estimator
\begin{align}\label{eq:cavdel}
\widetilde{\nabla f}(x) := \frac{1}{K} \sum_{k=1}^{K} \frac{\delta_k}{2 \sigma ^2}\Big(u(x+ \delta_k) - u(x- \delta_k)\Big).
\end{align}
This method can easily be extended to derivatives of any order with recursion, for example for the Laplacian:
\begin{equation}
\widetilde{\Delta f}(x)
:= \frac{1}{K} \sum_{k=1}^{K} \frac{\|\delta_k\| ^2 - 2\sigma ^2 }{2\sigma^4}
\Big(u(x + \delta_k) - 2u(x) + u(x-\delta_k)\Big).
\end{equation}
In fact, the function~$f$ defined in~\eqref{smoothing1} is Lipschitz continuous:
\begin{theorem}\label{thm:lip}
Let $u:\RR^d\to\RR$ be measurable.
The map
$\EE_{\delta \sim \Nn(0,\Id\sigma^{2})}[u(\cdot+ \delta)]$ is Lipschitz with respect to any arbitrary norm. 
In particular with the $2$-norm, it is\textit{} $\frac{\ub}{\sigma}\sqrt{\frac{2}{\pi}}$ Lipschitz, with $\ub = \sup_{x \in \RR^d}  |u(x)|$.
\end{theorem}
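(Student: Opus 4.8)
The plan is to deduce the global Lipschitz estimate from a uniform bound on the gradient, using the representation already furnished by Theorem~\ref{thm:delg}. I would first record that~$f$ is everywhere differentiable with
\[
\nabla f(x) = \frac{1}{\sigma^2}\,\EE_{\delta \sim \Nn(0,\Id\sigma^{2})}\big[\delta\, u(x+\delta)\big],
\]
so that, since~$f$ is scalar-valued and differentiable on every line segment, the mean value inequality upgrades any uniform bound on $\sup_{x}\|\nabla f(x)\|$ to a Lipschitz bound with the same constant in the $2$-norm. It therefore suffices to control $\|\nabla f(x)\|$ uniformly in~$x$.

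The heart of the argument, and the step that produces the dimension-free constant $\sqrt{2/\pi}$, is to estimate $\|\nabla f(x)\|$ by pairing the gradient against unit vectors rather than bounding $\EE[\|\delta\|]$ directly. For a fixed unit vector $v\in\RR^d$ one has $v\cdot\nabla f(x) = \frac{1}{\sigma^2}\EE_{\delta}[(v\cdot\delta)\,u(x+\delta)]$, and hence $|v\cdot\nabla f(x)| \le \frac{\ub}{\sigma^2}\,\EE_{\delta}[|v\cdot\delta|]$ using $\ub = \sup_x|u(x)|$. The key observation is that, because $\|v\|=1$, the scalar $v\cdot\delta$ is distributed as a one-dimensional $\Nn(0,\sigma^2)$ variable, so $\EE_{\delta}[|v\cdot\delta|]=\sigma\sqrt{2/\pi}$ is simply the mean of a half-normal. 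Taking the supremum over unit~$v$ then gives $\|\nabla f(x)\| \le \frac{\ub}{\sigma}\sqrt{2/\pi}$ uniformly in~$x$, which is exactly the claimed constant.

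The extension to an arbitrary norm is then immediate from the equivalence of all norms on~$\RR^d$: a map that is Lipschitz in the $2$-norm is Lipschitz in every norm, only the constant changing. The main obstacle is conceptual rather than computational — the naive route of bounding $\|\nabla f(x)\|\le \frac{\ub}{\sigma^2}\EE[\|\delta\|]$ is correct but yields a constant that grows with the dimension~$d$, so the crucial idea is to project onto the test direction~$v$ first and thereby reduce the expectation to a one-dimensional Gaussian, which collapses the dimension dependence and delivers the sharp factor $\sqrt{2/\pi}$.
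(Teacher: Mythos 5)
Your proposal is correct and takes essentially the same route as the paper: both rest on the gradient representation from Theorem~\ref{thm:delg} and control the gradient by pairing against test directions, so that $\EE\left[\left|y^\top \delta\right|\right]$ collapses to the mean of a one-dimensional half-normal, $\sigma\sqrt{2/\pi}$. The only cosmetic difference is that you deduce the arbitrary-norm statement from equivalence of norms on $\RR^d$ after proving the $2$-norm case via the mean value inequality, whereas the paper runs the same dual-norm estimate $L^{\alpha}(f)=\sup_x \|\nabla f(x)\|_{\alpha}^{*}$ directly for each $\alpha$-norm.
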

\begin{proof}
This statement is proved in~\cite{he2023learning} for the $\ell_2$-norm, and it is easy to extend to any arbitrary norm. 
Since~$f$ is differentiable by Theorem~\ref{thm:delg}, it is well known that its Lipschitz constant~$L^{\alpha}(f)$ in the $\alpha$-norm can be obtained as
$$
L^{\alpha}(f) = \sup_{x \in \RR^d} 
\|\nabla_x f(x)\|_{\alpha}^{*},
$$
with $\|z\|_{\alpha}^{*} := \sup_{\|y\|_{\alpha}\leq 1}y^\top z$ the dual norm of $\|\cdot\|_\alpha$. 
Using Theorem~\ref{thm:delg}, we can write
\begin{align}
L^{\alpha}(f)
= \sup_{x \in \RR^d}  \left\lVert \EE\left[\frac{\delta}{\sigma ^2} u(x+ \delta) \right]\right\rVert_{\alpha}^{*}
&= \frac{1}{\sigma ^2}\sup_{x \in \RR^d} \sup_{\|y\|_{\alpha}\leq 1}\left|y^\top\EE\left[\delta u(x+ \delta) \right]\right|\\
&= \frac{1}{\sigma ^2}\sup_{x \in \RR^d} \sup_{\|y\|_{\alpha}\leq 1}
\left|\EE\left[y^\top \delta u(x+ \delta) \right]\right|\\     
&\leq \frac{\ub}{\sigma ^2} \sup_{\|y\|_{\alpha}\leq 1}
\left|\EE\left[ y^\top\delta\right]\right|\\
&\leq \frac{\ub}{\sigma ^2} \sup_{\|y\|_{\alpha}\leq 1}
\EE\left[\left| y^\top\delta\right|\right]
\leq \frac{\ub}{\sigma ^2} \sup_{\|y\|_{\alpha}\leq 1}\sigma \sqrt{\frac{2\|y\|_2}{\pi}}.
\end{align}
When $\alpha = 2$, the upper bound becomes $\frac{\ub}{\sigma}  \sqrt{\frac{2}{\pi}}$ and the theorem follows.
\end{proof}

As concluded in~\cite{he2023learning}, this Lipschitz constant restriction is not particularly limiting in the classical setting since small values of~$\sigma$ can be used. 
%with the only real concern being round off error with the powers of sigma that appear in the derivative estimators.
However, in the quantum setting small values of~$\sigma$ introduce a large error if not enough shots are used. 
Specifically, consider the parameterised expectation 
(detailed in~\eqref{eq:basicparam})
\begin{equation}
    g(x) := 
\langle\Cc\rangle_{\Mg(x)\ket{\boldsymbol{0}}}
     = \braket{\boldsymbol{0}| \Mg(x)^{\dagger}\Cc\Mg(x)|\boldsymbol{0}} \in \mathbb{C},
    \qquad\text{for }x\in\RR^d.
\end{equation}
On actual quantum hardware we can only obtain an estimate~$\widetilde{g}(x)$ of~$g(x)$ using a finite number of shots, and we denote
\begin{equation}
    \varepsilon(x) := g(x) - \widetilde{g}(x)
\end{equation}
the (pointwise) inaccuracy, which is a random variable, since~$\widetilde{g}(x)$ is an empirical estimator.
This framework allows for error from both noisy circuits and estimating expectations using a finite number of shots.
Since the map~$g$ is bounded and the numbers of shots and qubits are finite, then there exists a constant $\varepsilon_f > 0$ such that $| \varepsilon(x)| \leq \varepsilon_f$ uniformly over $x \in \RR^d$.
Define $G(x) := \EE[g(x+ \delta)]$ 
and
$\widetilde{G}(x) := \EE
\left[\widetilde{g}(x+ \delta)\right]$, 
where $\delta_{\sigma} \sim \Nn(0,\Id\sigma^2)$, 
the following lemma provides a bound for the distance between the gradients of these two functions:

\begin{lemma}
The following bound holds for the Euclidean norm:
$$
\sup_{x\in\RR^d}\left\| \nabla G(x) - \nabla \widetilde{G}(x) \right\|
    \leq \frac{\sqrt{d}\, \varepsilon_f}{ \sigma}.
$$
\end{lemma}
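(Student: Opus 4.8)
The plan is to reduce the entire estimate to the pointwise inaccuracy~$\varepsilon$ by applying Theorem~\ref{thm:delg} separately to each of the two smoothed maps. Both~$g$ and~$\widetilde{g}$ are bounded, hence measurable, so Theorem~\ref{thm:delg} applies to $G(\cdot)=\EE[g(\cdot+\delta)]$ and $\widetilde{G}(\cdot)=\EE[\widetilde{g}(\cdot+\delta)]$ and yields the representations
$$
\nabla G(x) = \frac{1}{\sigma^2}\EE_{\delta}\left[\delta\, g(x+\delta)\right]
\qquad\text{and}\qquad
\nabla\widetilde{G}(x) = \frac{1}{\sigma^2}\EE_{\delta}\left[\delta\, \widetilde{g}(x+\delta)\right],
$$
with $\delta\sim\Nn(0,\Id\sigma^2)$. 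Here~$\widetilde{g}$ is treated as a fixed realisation of the estimator, so that the bound holds uniformly over the randomness of the shots (equivalently, one argues conditionally on the measurement outcomes).

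Subtracting and using linearity of the expectation, the difference of the gradients depends only on the error term:
$$
\nabla G(x) - \nabla\widetilde{G}(x)
= \frac{1}{\sigma^2}\EE_{\delta}\left[\delta\,\big(g(x+\delta)-\widetilde{g}(x+\delta)\big)\right]
= \frac{1}{\sigma^2}\EE_{\delta}\left[\delta\,\varepsilon(x+\delta)\right].
$$
Taking the Euclidean norm and pushing it inside the expectation (Jensen's inequality, the norm being convex), then invoking the uniform bound $|\varepsilon(\cdot)|\leq \varepsilon_f$, gives
$$
\left\|\nabla G(x) - \nabla\widetilde{G}(x)\right\|
\leq \frac{1}{\sigma^2}\EE_{\delta}\left[\|\delta\|\,|\varepsilon(x+\delta)|\right]
\leq \frac{\varepsilon_f}{\sigma^2}\,\EE_{\delta}\left[\|\delta\|\right].
$$

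It then remains only to control $\EE_{\delta}[\|\delta\|]$. A clean bound follows from Jensen (or Cauchy--Schwarz) applied to the concave square root, using $\EE[\|\delta\|^2]=\sum_{i=1}^d\EE[\delta_i^2]=d\sigma^2$:
$$
\EE_{\delta}\left[\|\delta\|\right] \leq \sqrt{\EE_{\delta}\left[\|\delta\|^2\right]} = \sqrt{d\,\sigma^2} = \sigma\sqrt{d}.
$$
Substituting into the previous display yields $\sigma\sqrt{d}\,\varepsilon_f/\sigma^2 = \sqrt{d}\,\varepsilon_f/\sigma$, and since the right-hand side does not depend on~$x$, the supremum over $x\in\RR^d$ obeys the same bound, which is the claim.

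I expect no serious obstacle here: the only points requiring a little care are verifying that Theorem~\ref{thm:delg} legitimately applies to the (random, but bounded and measurable) estimator~$\widetilde{g}$, and that the uniform bound $|\varepsilon|\leq\varepsilon_f$ transfers across the Gaussian convolution (which it does, pointwise, before integration). One could sharpen the constant using the exact first moment of the chi distribution, $\EE[\|\delta\|]=\sigma\sqrt{2}\,\Gamma(\tfrac{d+1}{2})/\Gamma(\tfrac{d}{2})$, but the crude $\sqrt{d}$ bound is precisely what produces the stated constant.
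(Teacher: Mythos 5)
Your proof is correct and follows essentially the same route as the paper's: apply Theorem~\ref{thm:delg} to both smoothed maps, combine by linearity of expectation, push the norm inside via Jensen, and invoke the uniform bound $|\varepsilon|\leq\varepsilon_f$. The only (immaterial) difference is the last step: the paper computes $\EE\|\delta\|$ exactly via the chi distribution and then applies Gautschi's inequality, whereas you use the more elementary Cauchy--Schwarz bound $\EE\|\delta\|\leq\sqrt{\EE\|\delta\|^2}=\sigma\sqrt{d}$ --- both give the same constant $\sqrt{d}\,\varepsilon_f/\sigma$.
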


\begin{proof}
From Theorem~\ref{thm:delg}, we can write, 
for any $x\in\RR^d$, 
\begin{align*}
\left\| \nabla G(x) - \nabla \widetilde{G}(x) \right\|
& = \left\| \frac{1}{\sigma^2}\EE[\delta_{\sigma} g(x+\delta_{\sigma})] - \frac{1}{\sigma^2}\EE\left[\delta_{\sigma}\widetilde{g}(x+\delta_{\sigma})\right]\right\| \\
& = \frac{1}{\sigma}\Big\|\EE[\delta_{1} g(x+\delta_{1}\sigma)] - \EE\left[\delta_{1}\widetilde{g}(x+ \delta_{1}\sigma)\right] \Big\| \\
& = \frac{1}{\sigma}\left\|\EE    [\delta_{1}\varepsilon(x+\delta_{1}\sigma)]  \right\|\\
& \leq \frac{1}{\sigma} \EE\|
    \delta_{1}\varepsilon(x+ \delta_{1}\sigma)\|
\leq \frac{\varepsilon_f}{\sigma}\EE\|\delta_{1}\|
    =\frac{\varepsilon_f \sqrt{2}}{\sigma}
    \frac{\Gamma(\frac{d+1}{2})}{\Gamma(\frac{d}{2})} 
    \leq\frac{\sqrt{d}\,\varepsilon_f}{\sigma}.
    \end{align*}
    We use Jensen's inequality to take the norm inside of the expectation. 
    The penultimate inequality follows since the Euclidean norm of the Gaussian is a Chi distribution, and the last inequality is Gautschi's~\cite[Equation (6)]{wendel1948note}.
\end{proof}
Therefore we see that decreasing  the parameter~$\sigma$ increases the impact of quantum induced sampling error.
Similar reasoning can be applied to finite difference based methods, and we refer the reader to ~\cite[Section~2.1]{berahas2021theoretical} for a general review of finite differences for gradients with error. 

%%%%%%%%%%%%%%%%%%%%%%%%%%%%%%%%%%%%%%%%%%%%
\subsection{Random classical networks}
We call `random classical network'
a single-hidden-layer feedforward neural network with randomly generated internal weights,
where only the last layer of weights and hyperparameters is optimised over.
Such random networks have previously been successfully applied to solving different types of (high-dimensional) PDEs~\cite{gonon2023random, jacquier2023random, mattheakis2021unsupervised, raissi2019physics}.
For the Black-Scholes-type PDE (similar to the heat equation),
Gonon~\cite{gonon2023random}
provided a full error analysis of the prediction error of random neural networks.
Specifically, let $N \in \mathbb{N}$ be the number of hidden nodes, 
$\Bf = (B_1,\cdots,B_N)$ an iid sequence of real-valued random variables and
$\Ef=(E_1,\cdots,E_N)$ another iid sequence of random variables in~$\RR^{d}$, independent of~$\Bf$. 
For a vector~$\Wf \in \RR^N$, 
define the random function
$$
H_{\Wf}^{\Ef, \Bf}(\xx):=\sum_{i=1}^N W_i \varrho\left(E_i x + B_i\right),
\qquad\text{for all }x \in\RR^d,
$$
where we consider the non-linear activation function
$\varrho (x) = \max(0,x)$. 
The vector~$\Wf$ then represents the trainable parameters, 
while~$\Ef$ and~$\Bf$ are sampled from some prior distribution and frozen. 
These random networks are considerably easier to train than traditional fully connected deep neural networks,
especially in a supervised learning context, where training reduces to a linear regression.
The PINN algorithm is not a supervised learning approach, 
and this therefore does not apply;
however it does reduce the number of trainable parameters, 
and hence the computational burden.
In~\cite{gonon2021approximation}, 
Gonon, Grigoryeva and Ortega proved that, as long as the unknown function is sufficiently regular, it is possible to draw the internal weights of the random network from a generic distribution (not depending on the unknown object) and quantify the error in terms of the network architecture.
%number of neurons and the hyperparameters.

%%%%%%%%%%%%%%%%%%%%%%%%%%%%%%%%%%%%%%%%%%%%
\subsection{Quantum neural networks}
\subsubsection{Quantum neural networks}
Using a quantum network for the PDE trial solution in the PINN algorithm was first proposed by Kyriienko, Paine and Elfving~\cite{Kyriienko_2021}. Essentially, the spatial variable~$x$ is embedded into a quantum state via a unitary operator~$\Uu(x)$ (referred to as the feature map), then a parameterised unitary operator $\Uu_\Theta$ (independent of~$x$) is applied before producing the output of the network by taking the expectation of a Hermitian cost operator~$\Cc$:
\begin{equation}
u_\Theta(x) := 
\langle\Cc\rangle_{\Uu_\Theta \, \Uu(x) \ket{\boldsymbol{0}}}
 = \braket{{\boldsymbol{0}|\left(  \Uu_\Theta\, \Uu(x) \right)^{\dagger}}\Cc \, \Uu_\Theta \, \Uu(x) |\boldsymbol{0}},
\end{equation}
and the parameters~$\Theta$ are found by minimising some loss function, for example~\eqref{eq:EmpLossFunction}.
In~\cite{Kyriienko_2021},
the authors suggested that increasing the number of qubits or using cost operators with more complex Pauli decompositions could produce more expressive networks.
Preliminary research~\cite{P_rez_Salinas_2021,PerezSalinas2020datareuploading,perez2021determining} has shown the potential of parameterising the feature maps and repeated application of unitary operators, leading to the more general formulation
\begin{equation}\label{eq:5}
u_\Theta(x) = \braket{{\boldsymbol{0}|\left( \prod_{i=L} ^{1} \Uu_{i,\theta_i} \Uu_{i,\omega_i}(x) \right)^{\dagger}}\Cc\left( \prod_{i=L}^{1} \Uu_{i,\theta_i} \Uu_{i,\omega_i}(x) \right)|\boldsymbol{0}},
\end{equation}
for some positive integer~$L$, where $\Theta = \{ \theta_1,\omega_1,\cdots,\theta_L,\omega_L\}$ is the set of all hyperparameters. 
In particular, the authors in~\cite{P_rez_Salinas_2021} showed that one-qubit networks can act as universal approximators for bounded complex continuous functions or integrable functions with a finite number of finite discontinuities. 

%%%%%%%%%%%%%%%%%%%%%%%%%%%%%%%%%%%%%%%%%%%%
\subsubsection{Random quantum networks}\label{sec:RandomQNN}
Consider a system with~$n$ qubits. Let $\bm{A}: {\Omega} \to \mathbb{C}^{2^n \times 2^n}$ be a random function which maps the spatio-temporal variable $x$ to a unitary matrix and $\bm{\Lambda} \in \mathbb{C}^{2^n \times 2^n}$  a random unitary matrix distributed according to the Haar measure. Then for a suitable set of parameters~$\Theta$, we define the random function $u^{\bm{\Lambda},\bm{A}}_\Theta : \Omega \to \RR$,
\begin{equation}
u^{\bm{\Lambda},\bm{A}}_\Theta(x) 
:= \braket{{\boldsymbol{0}| \Bigl(  \Ug_{\Theta}(x)\bm{\Lambda}\bm{A}(x) \Bigl) ^{\dagger}}\Cc\Bigl(  \Ug_{\Theta}(x)\bm{\Lambda}\bm{A}(x)\Bigl)|\boldsymbol{0}},
\end{equation}
where~$\Cc$ is a Hermitian cost operator. 
When using this random quantum neural network to approximate~$u$ we generate~$\bm{A}$ and~$\bm{\Lambda}$, consider them fixed and train the parameters~$\Theta$.
Specific examples of~$\bm{A}$ are given in Section~\ref{sec:specificnetworks}. 
\begin{remark}
In practice, one may encode the data~$x$ only through~$\bm{A}$ and leave~$\Ug_{\Theta}$ independent of~$x$.
We leave the current formulation as is, allowing for more generality.
\end{remark}

%%%%%%%%%%%%%%%%%%%%%%%%%%%%%%%%%%%%%%%%%%%%
\subsubsection{Optimised parameter shift}
When differentiating parameterised expectation values we apply the family of parameter shift rules discussed by Mari, Bromley and Killoran~\cite{mari2021estimating}.
In quantum computing, one-qubit rotation gates
can be written as $\exp\left\{-\frac{\I x}{2} \Gg\right\}$ for some unitary matrix~$\Gg$ and some real number~$x$.
We shall require here a slightly modified version:
\begin{definition}\label{def:rotationlike}
For $x\in\RR$, the matrix~$\Mg(x)$ is a single qubit rotation-like gate if
\begin{equation}
    \Mg(x) = \exp\left\{-\frac{\I x}{2} \Gg\right\},
\end{equation}
for some complex-valued involutory generator matrix~$\Gg$ (satisfying $\Gg^2 = \Id$).
\end{definition}
We shall use these single qubit rotation-like gates to construct an approximation~$u^{\bm{\Lambda},\bm{A}}_\Theta$.
The following lemma allows us to decompose the unitary conjugation $\Mg(x)^{\dagger}\, \Cc\, \Mg(x)$ to the sum of three easy-to-compute terms, 
as mentioned in~\cite[Equation~(5)]{mari2021estimating}, but without proof:
\begin{lemma}\label{lem:3Terms}
For any $x\in\RR$, the identity 
$\Mg(x)^{\dagger}\Cc\Mg(x) = \Ag + \Bg\cos(x) + \Cg\sin(x)$
holds for any single qubit rotation-like gate~$\Mg(x)$ with 
$$
\Ag = \frac{\Gg^\dagger\Cc\Gg + \Cc}{2},\qquad
\Bg = \frac{\Cc - \Gg^\dagger\Cc\Gg}{2},\qquad
\Cg = \frac{\I}{2}\left(\Gg^\dagger \Cc - \Cc\Gg\right).
$$
\end{lemma}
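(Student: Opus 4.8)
The plan is to exploit the involutory hypothesis $\Gg^2 = \Id$ to collapse the matrix exponential defining $\Mg(x)$ into a finite, Euler-type expression, and then expand the conjugation $\Mg(x)^\dagger\Cc\Mg(x)$ term by term, converting half-angles into full angles at the very end.

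First I would record the key identity coming from involutivity. Since $\Gg^2 = \Id$, every even power of $\Gg$ equals $\Id$ and every odd power equals $\Gg$, so splitting the Taylor series of the exponential into its even and odd parts yields
\begin{equation*}
\Mg(x) = \exp\left\{-\frac{\I x}{2}\Gg\right\} = \cos\left(\frac{x}{2}\right)\Id - \I\sin\left(\frac{x}{2}\right)\Gg.
\end{equation*}
This is the matrix analogue of Euler's formula and holds for any involutory~$\Gg$, regardless of Hermiticity. Taking the conjugate transpose, and using that the scalars $\cos(x/2)$ and $\sin(x/2)$ are real, gives $\Mg(x)^\dagger = \cos(x/2)\Id + \I\sin(x/2)\Gg^\dagger$.

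Next I would substitute both expressions into $\Mg(x)^\dagger\Cc\Mg(x)$ and multiply out. Writing $c := \cos(x/2)$ and $s := \sin(x/2)$, the product expands into four terms, two of which combine, leaving
\begin{equation*}
\Mg(x)^\dagger\Cc\Mg(x) = c^2\,\Cc + s^2\,\Gg^\dagger\Cc\Gg + \I s c\left(\Gg^\dagger\Cc - \Cc\Gg\right).
\end{equation*}
Applying the double-angle identities $c^2 = \tfrac{1}{2}(1+\cos x)$, $s^2 = \tfrac{1}{2}(1-\cos x)$ and $sc = \tfrac{1}{2}\sin x$ then regroups this into a constant part, a part proportional to $\cos x$, and a part proportional to $\sin x$, which are precisely the matrices $\Ag$, $\Bg$ and $\Cg$ of the statement. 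Reading off the coefficients closes the argument.

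Since the involutory property $\Gg^2 = \Id$ does all the heavy lifting, the proof is essentially a direct computation and I anticipate no genuine obstacle. The only two points demanding care are keeping $\Gg^\dagger$ distinct from $\Gg$ throughout (as $\Gg$ is assumed involutory but not necessarily Hermitian, so the cross terms do not simplify further), and correctly translating the half-angle coefficients produced by $\Mg$ into the full-angle $\cos x$ and $\sin x$ appearing in the claimed decomposition.
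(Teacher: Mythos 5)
Your proposal is correct and follows essentially the same route as the paper's own proof: collapsing the exponential via $\Gg^2=\Id$ into $\cos(x/2)\Id - \I\sin(x/2)\Gg$, expanding the conjugation, and regrouping with the double-angle identities to read off $\Ag$, $\Bg$, $\Cg$. The only cosmetic difference is that you apply $s^2 = \tfrac{1}{2}(1-\cos x)$ directly where the paper first rewrites $\sin^2(x/2)$ as $1-\cos^2(x/2)$, which is the same computation.
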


\begin{proof}
Let $x\in\RR$. 
Since~$\Mg(x)$ is a rotation-like gate as in Definition~\ref{def:rotationlike} with involutory generator~$\Gg$, then it is trivial to show that
$$
\Mg(x) = \exp\left\{-\frac{\I x}{2} \Gg\right\}
 = \cos\left(\frac{x}{2}\right)\Id - \I\sin\left(\frac{x}{2}\right)\Gg.
$$
Therefore
\begin{align*}
\Mg^{\dagger}\Cc\Mg
 & = \Big(\cos\left(\frac{x}{2}\right)\Id - \I\sin\left(\frac{x}{2}\right)\Gg\Big)^\dagger\Cc\Big(\cos\left(\frac{x}{2}\right)\Id - \I\sin\left(\frac{x}{2}\right)\Gg\Big)\\
 & = \Big(\cos\left(\frac{x}{2}\right)\Id + \I\sin\left(\frac{x}{2}\right)\Gg^\dagger\Big)\Cc\Big(\cos\left(\frac{x}{2}\right)\Id - \I\sin\left(\frac{x}{2}\right)\Gg\Big)\\
 & = \left[\cos\left(\frac{x}{2}\right)^2\Cc + 
 \sin\left(\frac{x}{2}\right)^2\Gg^\dagger\Cc\Gg\right]
 + \I\left[\sin\left(\frac{x}{2}\right)\cos\left(\frac{x}{2}\right)\Gg^\dagger\Cc
  - \sin\left(\frac{x}{2}\right)\cos\left(\frac{x}{2}\right)\Cc\Gg\right]\\
 & = \left[\cos\left(\frac{x}{2}\right)^2\Cc + 
 \left(1-\cos\left(\frac{x}{2}\right)^2\right)\Gg^\dagger\Cc\Gg\right]
 + \frac{\I}{2}\left(\Gg^\dagger \Cc - \Cc\Gg\right) \sin(x)\\
& = \Gg^\dagger\Cc\Gg + \left(\Cc - \Gg^\dagger\Cc\Gg\right)\cos\left(\frac{x}{2}\right)^2  + \frac{\I}{2}\left(\Gg^\dagger \Cc - \Cc\Gg\right) \sin(x)\\
& = \Gg^\dagger\Cc\Gg + \left(\Cc - \Gg^\dagger\Cc\Gg\right)\frac{1+\cos(x)}{2}  + \frac{\I}{2}\left(\Gg^\dagger \Cc - \Cc\Gg\right) \sin(x)\\
& = \frac{\Gg^\dagger\Cc\Gg + \Cc}{2}
+\frac{\Cc - \Gg^\dagger\Cc\Gg}{2}\cos(x)
+ \frac{\I}{2}\left(\Gg^\dagger \Cc - \Cc\Gg\right) \sin(x),
\end{align*}
and the lemma follows.
\end{proof}
Consider the function $g:\RR\to\RR$ defined as
\begin{equation}\label{eq:basicparam}
    g(x) := \braket{{\boldsymbol{0}| \Mg(x)^{\dagger}\, \Cc\, \Mg(x)|\boldsymbol{0}}},
    \qquad\text{for any }x\in\RR,
\end{equation}
given some single qubit rotation-like matrix~$\Mg$ as in Definition~\ref{def:rotationlike}.
Clearly~$g$ is infinitely differentiable, and 
it is furthermore periodic by Lemma~\ref{lem:3Terms}.
Denote its partial derivatives 
\begin{equation}
    g_{j_1,j_2,\cdots,j_d} (x)
    := \frac{\partial ^d g(x)}{\partial x_{j_1} \partial x_{j_2}\cdot \cdot \cdot \partial x_{j_d}},
    \qquad\text{for any }x\in\RR.
\end{equation}
Applying the family of parameter shift rules~\cite[Equation~(9)]{mari2021estimating} results in
\begin{equation}\label{eq:paramshift}
    g_{j} (x) = \frac{g(x+s\ef_j)-g(x-s\ef_j)}{2\sin(s)},
    \qquad\text{for any }x\in\RR,
\end{equation}
for any $s \in \RR\setminus\{k\pi:k\in\mathbb{Z}\}$, 
where~$\ef_j$ is a unit vector along the $x_j$ axis.
Iteratively applying this rule with the same shift results in
\begin{equation}
    g_{j_1,j_2}(x) = \frac{g(x+s(\ef_{j_1}+\ef_{j_2}))- g(x+s(-\ef_{j_1}+\ef_{j_2}))
    -g(x+s(\ef_{j_1} - \ef_{j_2}))+g(x-s(\ef_{j_1}+\ef_{j_2}))}{4\sin(s)^2},
\end{equation}
for any $x\in\RR$.
For $j_1 = j_2$ and using the value $s=\frac{\pi}{2}$ reduces this to
\begin{equation}\label{eq:reduced}
    g_{j,j} (x) = \frac{g(x+\pi \ef_j)-g(x)}{2},
\end{equation}
and for $s = \frac{\pi}{4}$ we obtain
\begin{equation}\label{eq:2}
g_{j,j} (x)
= \frac{g\left(x+\ef_j\frac{\pi}{2}\right)-2g(x) + g\left(x-\ef_j\frac{\pi}{2}\right)}{2}.
\end{equation}
While~\eqref{eq:reduced} only requires the evaluation of two expectation values compared to the three of~\eqref{eq:2}, the latter has the distinct advantage of providing both the derivatives~$g_{j,j}$ and~$g_{j}$.
For the complexity analysis we apply either~\eqref{eq:reduced} or~\eqref{eq:2} depending on which one is more efficient for the chosen PDE. 
The optimised parameter shift rules above are covered in~\cite{mari2021estimating}, but simple yet tedious computations can provide higher other derivatives if needed, as shown in the following:
%We can in fact derive a general optimised expression for a $d^{\text{th}}$ order partial derivative when $d\geq 2$:
\begin{theorem}\label{thm:reduced5}
Let~$g$ be any function of the form~\eqref{eq:basicparam} with~$\Mg$ a single qubit rotation-like gate and $\Cc$ a Hermitian cost operator.
For any $d \geq 2$ and all $x \in \RR$, 
\begin{equation}\label{eq:reduced5}
\frac{\partial^d g(x)}{\partial x_j ^d}  = \left[2\sin\left(\frac{\pi}{d}\right)\right]^{-d}
\left\{\left(1 + (-1)^d\right)
g(x + \pi \ef_j) + 
\sum_{i=1}^{d-1}  (-1)^{i+d}{d \choose {i}}g\left(x + 
\left(\frac{2\pi i}{d}-\pi\right)\ef_j\right)
\right\}.
\end{equation}
\end{theorem}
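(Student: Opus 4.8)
The plan is to exploit the fact that, along the $x_j$-direction, the map $g$ is a trigonometric polynomial of degree one. Indeed, fixing the remaining coordinates and regarding $g$ as a function of $x_j$ alone, Lemma~\ref{lem:3Terms} gives $\Mg(x_j)^{\dagger}\Cc\Mg(x_j) = \Ag + \Bg\cos(x_j) + \Cg\sin(x_j)$, so taking the expectation in the state $\ket{\boldsymbol{0}}$ yields
\begin{equation*}
g(x) = a + b\cos(x_j) + c\sin(x_j), \qquad a := \braket{\boldsymbol{0}|\Ag|\boldsymbol{0}},\ b := \braket{\boldsymbol{0}|\Bg|\boldsymbol{0}},\ c := \braket{\boldsymbol{0}|\Cg|\boldsymbol{0}},
\end{equation*}
which lies in the span of $\{1,\E^{\I x_j},\E^{-\I x_j}\}$. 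Both sides of~\eqref{eq:reduced5} are linear in $g$, so it suffices to verify the identity for each monomial $g(x)=\E^{\I\omega x_j}$ with $\omega\in\{-1,0,1\}$; the general case then follows by linearity.

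The next step is to consolidate the right-hand side of~\eqref{eq:reduced5} into a single binomial sum. Since $g$ is $2\pi$-periodic in $x_j$, we have $g(x-\pi\ef_j)=g(x+\pi\ef_j)$, and a direct check of the endpoints shows that the isolated term $(1+(-1)^d)g(x+\pi\ef_j)$ is exactly the $i=0$ and $i=d$ contributions of the extended sum. Hence the right-hand side equals
\begin{equation*}
\left[2\sin\!\left(\tfrac{\pi}{d}\right)\right]^{-d}(-1)^d\sum_{i=0}^{d}(-1)^{i}\binom{d}{i}\,g\!\left(x+\left(\tfrac{2\pi i}{d}-\pi\right)\ef_j\right).
\end{equation*}

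With $g(x)=\E^{\I\omega x_j}$ I would factor out $\E^{\I\omega x_j}\E^{-\I\omega\pi}$ and apply the binomial theorem to collapse the sum to $\bigl(1-\E^{2\pi\I\omega/d}\bigr)^{d}$. The engine of the simplification is the identity $1-\E^{\I\theta}=-2\I\sin(\theta/2)\E^{\I\theta/2}$ with $\theta=2\pi\omega/d$, giving $\bigl(1-\E^{2\pi\I\omega/d}\bigr)^{d}=(-2\I)^{d}\sin(\pi\omega/d)^{d}\E^{\I\pi\omega}$. The residual phases cancel ($\E^{-\I\omega\pi}\E^{\I\pi\omega}=1$) and $(-1)^d(-2\I)^d=(2\I)^d$, so the right-hand side reduces to $\I^{d}\,\sin(\pi\omega/d)^{d}\sin(\pi/d)^{-d}\,\E^{\I\omega x_j}$. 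This is to be compared with the direct computation of the left-hand side, $\partial_{x_j}^{d}\E^{\I\omega x_j}=(\I\omega)^{d}\E^{\I\omega x_j}$: for $\omega=0$ both vanish, for $\omega=1$ the ratio $\sin(\pi/d)^d/\sin(\pi/d)^d=1$ gives agreement, and for $\omega=-1$ the factor $\sin(-\pi/d)^{d}=(-1)^{d}\sin(\pi/d)^{d}$ supplies the matching $(-1)^d$. The hypothesis $d\ge 2$ guarantees $\sin(\pi/d)\neq 0$ so the prefactor is well defined.

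The proof has no deep obstacle once the degree-one structure is recognised; the main place for care is the phase bookkeeping in the final paragraph, where the factors $\E^{\pm\I\omega\pi}$ arising from the shift $-\pi$ and from the half-angle identity must be tracked so that they cancel exactly, and the sign $(-1)^d$ in the consolidated prefactor must be combined correctly with $(-2\I)^d$. Recognising that $2\pi$-periodicity lets the isolated $g(x+\pi\ef_j)$ term be reabsorbed as sum endpoints is the one non-mechanical observation required.
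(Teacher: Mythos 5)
Your proof is correct, but it takes a genuinely different route from the paper's. The paper argues top-down: it iterates the parameter-shift rule~\eqref{eq:paramshift} $d$ times with the fixed shift $s=\pi/d$, which yields the full binomial expansion~\eqref{eq:nonreduced}, and then invokes $2\pi$-periodicity only to merge the endpoint terms $i=0$ and $i=d$ into the isolated $(1+(-1)^d)\,g(x+\pi\ef_j)$ term. You argue bottom-up: Lemma~\ref{lem:3Terms} shows that any $g$ of the form~\eqref{eq:basicparam} is a degree-one trigonometric polynomial in the shifted variable, so by linearity of both sides it suffices to verify the identity on the three Fourier modes $\E^{\I\omega x_j}$, $\omega\in\{-1,0,1\}$, where the binomial theorem and the half-angle identity $1-\E^{\I\theta}=-2\I\sin(\theta/2)\E^{\I\theta/2}$ collapse the sum; your phase bookkeeping is right (the factors $\E^{\mp\I\omega\pi}$ cancel and $(-1)^d(-2\I)^d=(2\I)^d$), the three cases match $(\I\omega)^d\E^{\I\omega x_j}$, and $d\ge 2$ correctly guarantees $\sin(\pi/d)\neq 0$. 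Each approach buys something: the paper's derivation explains where the coefficients and the shift $\pi/d$ come from and reuses the same shift-rule machinery applied elsewhere to mixed partials, but it asserts the iterated formula~\eqref{eq:nonreduced} without proof (strictly it requires an induction); your argument avoids that induction entirely and is fully self-contained given Lemma~\ref{lem:3Terms} --- indeed, since your consolidated sum is exactly~\eqref{eq:nonreduced}, your verification proves that intermediate formula as a by-product --- though, being a verification of a stated identity rather than a derivation, it would not by itself have produced the coefficients had they not been given.
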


\begin{proof}
Repeated applications of the parameter shift rule~\eqref{eq:paramshift} with the same basis vector and the same shift magnitude for each shift results in    \begin{equation}\label{eq:nonreduced}
       \frac{\partial ^d}{\partial x_j ^d} g(x) = \sum_{i=0}^{d}  (-1)^{i+d}{d \choose {i}}\frac{g\left(x + 
        (\frac{2\pi i}{d}-\pi)\ef_j\right)}{(2\sin(\pi/d))^d}.
    \end{equation}
The first and last terms in the expansion are
    \begin{equation}
        \frac{1}{(2\sin(\pi/d))^d}\left(
        (-1)^d g(x-\pi \ef_j) + g(x+\pi \ef_j)\right).
    \end{equation}
By periodicity of our expectation function, 
for~$d$ even this reduces to
$\frac{2g(x+\pi \ef_j)}{(2\sin(\pi/d))^d}$,
whereas when~$d$ is odd the first and last terms in the expansion cancel.
\end{proof}
We see that we have the ability to find a $d^{\text{th}}$ order unmixed partial derivative with just $d$ evaluations of the parameterised expectation. Quantum automatic differentiation engines can be built using a combination of the chain rule and the above formulae where the argument of each expectation is calculated on a classical computer, the expectation is calculated on a quantum computer before the result is fed back to the classical computer to compile the different values in the correct way to construct the derivative. 
This suggested method is in sharp contrast to the automatic differentiation procedures carried out on classical neural networks where different orders of derivatives can only be calculated sequentially. For traditional automatic differentiation the algorithm has to first build the computational graph for first-order derivatives
and then perform back-propagation on the graph for the second-order derivatives before this process is repeated.
Note that previous publications such as~\cite{Kyriienko_2021} suggested repeated applications of the basic parameter shift rule which does not benefit from the computational acceleration discussed above.

%%%%%%%%%%%%%%%%%%%%%%%%%%%%%%%%%%%%%%%%%%%%
\section{Random network architecture and complexity analysis}\label{sec:RNN}
\subsection{Random network architecture}\label{sec:specificnetworks}
We assume the input data $x$ is scaled to the interval $[0,1]$. 
By repeated applications of the so-called UAT (Universal Approximation) gate, 
one-qubit circuits have the ability to approximate any bounded complex function on the unit hypercube~\cite{P_rez_Salinas_2021}. 
This UAT gate is defined as
\begin{equation}
    \Ug^{\text{UAT}}_\Theta(x) := \Rg_{y}(2\varphi)\Rg_{z}(2\gamma^\top x +2\alpha),
    \qquad\text{for all } x \in [0,1]^d,
\end{equation}
with $\Theta = (\varphi,\gamma,\alpha) \in \RR\times\RR^d\times\RR$. 
This was extended in~\cite{PerezSalinas2020datareuploading} to multiple qubits by applying the UAT gate to each qubit followed by an entangling layer. 
We create the random network's trainable layer using this UAT gate as well as an entangling layer, which can be repeated several times. 
For the entangling layer we choose the sequence
\begin{equation}
\prod_{i= n -1}^{0}\Cnotg(q_i,q_{i+1 \, \mathrm{mod} \, n}),
\end{equation}
of controlled \texttt{NOT} gates, 
where $\Cnotg(q_i,q_j)$ denotes the controlled \texttt{NOT} gate acting on qubit~$q_j$ with control qubit~$q_i$. 
The whole trainable $M$-layer circuit then reads
\begin{equation}\label{fullcircuit}
\Ug_{\Theta}(x)
:= \prod_{j= 1}^{M}\left(\prod_{i= n -1}^{0}\Cnotg(q_i,q_{i+1 \, \mathrm{mod} \, n})\right)\left( 
\bigotimes_{i= 0}^{n -1} \Ug^{\text{UAT}}_{\Theta_{i,j}}(x)\right),
\end{equation}
and we write $\Theta = \{\Theta_{i,j}\}_{0 \leq i \leq n-1, 1 \leq j \leq M }$.
For the `encoding' matrix~$\bm{A}(x)$ introduced in Section~\ref{sec:RandomQNN}, we use
\begin{equation}\label{eq:rot}
\bm{A}(x) := \bigotimes_{j = 1 }^{n}
\Rg_{z}\left[\mathcal{X}_j\pi\acos\left(\frac{3}{2} \left(x_{\mathfrak{g}_{j}}-\half\right)\right)\right], 
\end{equation}
where $(\mathcal{X}_j)_{j=1,\ldots,n}$ are iid $\mathcal{U}(0,1)$ and an index $\mathfrak{g}_{j} := 1 + (j+1 \mod d)$ with nonlinear activation function~$\acos(\cdot)$.
The constants inside~$\acos(\cdot)$ and the constant~$\pi$ inside~$\Rg_{z}$ are justified experimentally as arguments of~$\Rg_{z}$ away from $\pm \pi$ lead to better results.
Such a choice of encoding matrix is inspired from~\cite{Kyriienko_2021},
who show that the resulting matrix~$\bm{A}(x)$ can then be written as a Chebyshev polynomial, the order of which increases with~$n$, known for approximating well non-linear functions.

Finally, we use a specific Ising Hamiltonian with transverse and longitudinal magnetic fields and homogeneous Ising couplings for the cost operator:
\begin{equation}\label{eq:IsingHamilt}
    \Cc = \sum_j \left[ \widehat{\Zg}_{j \Mod{n}
}\widehat{\Zg}_{j+1 \Mod{n}}
+ \widehat{\Zg}_{j \Mod{n}} + \widehat{\Xg}_{j\Mod{n}} \right],
\end{equation}
where~$\Xg$ and~$\Zg$ are the usual one-qubit Pauli gates and the index refers to the qubit index they act upon.
This is a relatively complex cost operator, allowing us to approximate a wide class of functions, better than  the simpler cost operator $\sum_j \widehat{\Zg}_j$.
Therefore the output of the quantum network is 
\begin{equation}\label{fullquantnetwork}
u_\Theta(x) :=  
\langle\Cc\rangle_{\Ug_{\Theta}(x) \ket{\boldsymbol{0}}}
 = \braket{{\boldsymbol{0}|\left(  \Ug_{\Theta}(x) \right)^{\dagger}}\Cc \, \Ug_{\Theta}(x) |\boldsymbol{0}}.
\end{equation}

%%%%%%%%%%%%%%%%%%%%%%%%%%%%%%%%%%%%%%%%%%%%
\subsection{Complexity analysis}\label{sec:Complexity}
%change all to subsubsection
Define the auxiliary variables~$\alpha_i$ and~$\alpha_{i,j}$ 
as the number of single-qubit rotation-like gates 
respectively with~$x_i$ dependence and with~$x_i$ and~$x_j$ dependence in the circuit for~$u_\Theta$. 
These are not uniquely defined values since the decomposition into single-qubit rotation-like gates is not unique.
Given the loss function~\eqref{eq:EmpLossFunction}, 
define the quantity $\xi(\mathcal{L}_{n_e,n_r})$, 
which returns the total number of~$u_\Theta$ evaluations needed to calculate the loss function $\mathcal{L}_{n_e,n_r}$,
and which we will use as our complexity metric.
Let $N_{\Theta}$ be the total number of components in~$\Theta$. It may seem pertinent to include in this metric the cost of computing all the derivatives $(\partial_{\Theta_i}\mathcal{L})_{i=1,\ldots,N_{\Theta}}$
(for example for the optimisation part);
this is however unnecessary since the exact number of~$u_\Theta$ evaluations to do so is given by
$2N_{\Theta}\xi(\mathcal{L})$.

%%%%%%%%%%%%%%%%%%%%%%%%%%%%%%%%%%%%%%%%%%%%
\subsection{The p-Laplace equation}\label{variational1}
Consider the $p$-Laplace ($1 < p < \infty$) 
equation with Dirichlet boundary conditions:
\begin{align}\label{eq:poiss}
 \begin{cases}
 \Delta_p u + f = 0, & \text{on }\Omega, \\ 
 u=h, & \text{on }\partial \Omega,
 \end{cases}
\end{align}
where $u \in \Ww_0^{1, p}(\Omega) \cap L^p(\Omega)$, $f \in L^{\frac{p}{p-1}}(\Omega)$ and $h$ the trace of some $\Ww_0^{1, p}(\Omega) \cap L^p(\Omega)$ function. 
Where the $p$-Laplace operator reads
\begin{equation}\label{eq:LaplaceOperator}
    \Delta_p u := |\nabla u|^{p-4}\left(|\nabla u|^{2}\Delta u + (p-2)\sum_{i,j =1}^{d} \frac{\partial u}{\partial x_i}\frac{\partial u}{\partial x_j}\frac{\partial^2 u}{\partial x_i \partial x_j}\right).
\end{equation}
For the variational formulation we have~\cite[Lemma~(2.3)]{bhuvaneswari2012weak}
\footnote{This is for the case with the 
zero boundary condition.
However any problem with prescribed
nonzero boundary values can easily be transformed into this setting~\cite[Section~(6.1.2)]{evans2022partial}}

\begin{equation}\label{weak11}
    u \in \argmin_{%
      \substack{%
        v \in \Ww^{1,p}(\Omega) \\
        \Tr v = h
      }} \left (\frac{1}{p}\int_\Omega |\nabla v|^p \D x - \int_\Omega f v \D x\right).     
\end{equation}

We refer the reader to~\cite{evans2022partial} for a reference on the weak formulation of PDEs.
%, the variational formulation arises from the weak setting. 
To translate the variational statement~\eqref{weak11} to a loss function,
we add a penalisation term for the boundary conditions and approximate the integral along the sample points, resulting in the empirical loss function
\begin{equation}\label{eq:EmpLossFunctionWeak}
\mathcal{L}^{\text{Var}}_{n_e,n_r}(u_\Theta)
 := \frac{\lambda_e}{n_e}\sum_{i=1}^{n_e}\left|
    u_\Theta\left(X^{(e)}_i\right) - h\left(X^{(e)}_i\right)   \right|
    + \frac{1}{n_r}\left (\frac{1}{p} |\nabla u_\Theta (X^{(r)}_i)|^p -  f(X^{(r)}_i) u_\Theta(X^{(r)}_i) \right).
\end{equation}

This idea of minimising a functional to solve PDEs using neural networks has previously been studied for Poisson equations~\cite{e2018deep,muller2020deep}. 
The cases $p\ne 2$ and $p=2$ need to be studied separately 
since the mixed second-order partial derivatives of the $p$-Laplace operator~\eqref{eq:LaplaceOperator} cancel when $p=2$.

%%%%%%%%%%%%%%%%%%%%%%%%%%%%%%%%%%%%
\subsubsection{The $p \ne 2$ case}

\begin{lemma}\label{proofcomplex1}
For the prototypical PINN loss function~\eqref{eq:EmpLossFunction}, we have
$$
\xi(\mathcal{L}_{n_e,n_r}) = n_r
    \sum_{i,j = 1, i \leq j}^{d} 
    \left(4\alpha_i \alpha_j - \alpha_{i,j} \right)   + n_e,
$$
provided the second-order partial derivatives all commute, namely when
\begin{equation}
    \left[\frac{\partial v_{\Theta}}{\partial x_i},\frac{\partial v_{\Theta}}{\partial x_j} \right] = 0, \quad \text{for all } i,j \leq d.
\end{equation}
Without any assumptions on the partial derivatives, we have
$$
\xi(\mathcal{L}_{n_e,n_r})
= n_r
\sum_{i,j = 1}^d \left(4\alpha_i \alpha_j - \alpha_{i,j} \right) + n_e.
$$
\end{lemma}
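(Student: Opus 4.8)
The plan is to split the cost of the loss~\eqref{eq:EmpLossFunction} into its two additive pieces and count the parameterised evaluations each requires. The boundary piece $\frac{\lambda_e}{n_e}\sum_i|u_\Theta(X_i^{(e)})-h(X_i^{(e)})|$ queries $u_\Theta$ directly at the $n_e$ boundary points (the data values $h(X_i^{(e)})$ cost nothing), contributing exactly $n_e$ evaluations. For the residual piece $\frac{1}{n_r}\sum_i|\mathscr{F}(u_\Theta)(X_i^{(r)})|$ with $\mathscr{F}(u)=\Delta_p u+f$, the term $f$ is again free, so at each of the $n_r$ collocation points the cost is that of assembling the operator~\eqref{eq:LaplaceOperator}, i.e.\ of producing the full gradient $\nabla u_\Theta$ together with the full Hessian $(\partial^2 u_\Theta/\partial x_i\partial x_j)_{i,j}$. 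Writing $C$ for this per-point cost, the target reads $\xi(\mathcal{L}_{n_e,n_r})=n_e+n_r C$, so everything reduces to computing $C$.

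To compute $C$ I would first expose the gate structure. Each rotation-like gate carries an argument $\theta_k(x)$ that is affine in $x$, so $u_\Theta(x)=g(\theta_1(x),\dots,\theta_m(x))$ and, crucially, affineness kills the second-order term of the chain rule, leaving
\begin{equation*}
\frac{\partial^2 u_\Theta}{\partial x_i\partial x_j}(x)=\sum_{k\in S_i}\sum_{l\in S_j}\frac{\partial\theta_k}{\partial x_i}\frac{\partial\theta_l}{\partial x_j}\frac{\partial^2 g}{\partial\theta_k\partial\theta_l},
\end{equation*}
where $S_i$ is the set of rotation-like gates depending on $x_i$, with $|S_i|=\alpha_i$ and $|S_i\cap S_j|=\alpha_{i,j}$. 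Each factor $\partial^2 g/\partial\theta_k\partial\theta_l$ is now a derivative in the scalar arguments $\theta_k,\theta_l$, to which Lemma~\ref{lem:3Terms} and the shift rules~\eqref{eq:paramshift}--\eqref{eq:2} apply directly.

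The counting then splits according to whether the two gates coincide. For $k\neq l$ the cross derivative $\partial^2 g/\partial\theta_k\partial\theta_l$ is a genuine mixed shift requiring four evaluations, and there are $\alpha_i\alpha_j-\alpha_{i,j}$ such ordered pairs. For $k=l$, which forces $k\in S_i\cap S_j$ and hence gives $\alpha_{i,j}$ pairs, $\partial^2 g/\partial\theta_k^2$ is a single-argument second derivative; evaluating it through~\eqref{eq:2} with $s=\tfrac{\pi}{2}$ costs three evaluations and, as observed just below~\eqref{eq:2}, simultaneously returns the first derivative $\partial g/\partial\theta_k$. Summing, one derivative $\partial^2 u_\Theta/\partial x_i\partial x_j$ costs $4(\alpha_i\alpha_j-\alpha_{i,j})+3\alpha_{i,j}=4\alpha_i\alpha_j-\alpha_{i,j}$ evaluations; the diagonal case $i=j$ is absorbed by reading $\alpha_{i,i}=\alpha_i$. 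Because the diagonal Hessian entries $\partial^2 u_\Theta/\partial x_i^2$ already trigger the $k=l$ computation for every gate in $S_i$, and those reuse exactly the $\pm\tfrac{\pi}{2}$ shifts needed for $\partial u_\Theta/\partial x_i$, the whole gradient is obtained at no extra cost and need not be counted separately.

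It then remains to sum $4\alpha_i\alpha_j-\alpha_{i,j}$ over the index set dictated by~\eqref{eq:LaplaceOperator}. When the partial derivatives commute, $\partial^2 u_\Theta/\partial x_i\partial x_j=\partial^2 u_\Theta/\partial x_j\partial x_i$, so only the upper triangle $i\leq j$ must be assembled, giving $C=\sum_{i\leq j}(4\alpha_i\alpha_j-\alpha_{i,j})$; without this assumption every ordered pair is computed, giving $C=\sum_{i,j}(4\alpha_i\alpha_j-\alpha_{i,j})$. Multiplying by $n_r$ and adding the $n_e$ boundary evaluations yields both displayed formulas. I expect the main obstacle to be the evaluation bookkeeping rather than any analysis: one must argue carefully that the sole saving over the naive $4\alpha_i\alpha_j$ arises from the coincident gates $k=l$ (each trading one of the four shifts for the unshifted value via~\eqref{eq:2}), that the first-order derivatives are genuinely free under the $s=\tfrac{\pi}{2}$ choice, and be explicit that the metric counts the shared evaluation $g(x)$ with multiplicity, so that the clean correction $-\alpha_{i,j}$ emerges.
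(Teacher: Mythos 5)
Your proof follows essentially the same route as the paper's: the same decomposition $u_\Theta(x)=g(\phi_1(x),\dots,\phi_p(x))$ into rotation-like gate angles, the same split of the chain-rule double sum into cross pairs (costing $4(\alpha_i\alpha_j-\alpha_{i,j})$ via two applications of~\eqref{eq:paramshift}) and coincident pairs (costing $3\alpha_{i,j}$ via~\eqref{eq:2}), the same reduction to the upper triangle under commutation, and the same $n_r$/$n_e$ bookkeeping. The counts all agree with the paper's.

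One step, however, rests on a false premise. You claim the gate angles are affine in $x$, so that ``affineness kills the second-order term of the chain rule.'' That is not true for the networks this lemma is meant to cover: the encoding layer~\eqref{eq:rot} uses angles of the form $\mathcal{X}_j\pi\acos\bigl(\tfrac{3}{2}(x_{\mathfrak{g}_j}-\tfrac12)\bigr)$ (and $\tanh$ in the HJB variant), which are nonlinear, so $\partial^2\phi_a/\partial x_i\partial x_j$ does not vanish and the Hessian of $u_\Theta$ contains the extra summation $\sum_a (\partial g/\partial\phi_a)\,(\partial^2\phi_a/\partial x_i\partial x_j)$ that your displayed formula omits. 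The paper keeps this term and disposes of it by exactly the observation you make later about the gradient: the three-point rule~\eqref{eq:2} with $s=\tfrac{\pi}{2}$ returns the first derivatives $\partial g/\partial\phi_a$ as free by-products, and the factors $\partial^2\phi_a/\partial x_i\partial x_j$ are classical, so the term costs zero additional quantum evaluations. Your final count is therefore correct, and your own argument already contains the ingredient needed to repair the gap — but as written the justification for dropping the term is wrong, and had you instead used the two-evaluation rule~\eqref{eq:reduced} for the diagonal terms (which the spurious affineness would seem to permit if the gradient were not needed), you would have obtained the incorrect count $4\alpha_i\alpha_j-2\alpha_{i,j}$. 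Replace the affineness claim by the statement that the second-order chain-rule term only involves the quantum quantities $\partial g/\partial\phi_a$ already delivered by~\eqref{eq:2}, and the proof matches the paper's.
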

\begin{proof}
Consider decomposing the quantum circuit~\eqref{fullcircuit} responsible for producing $u_\Theta(x)$ into one with just single qubit rotation-like gates and CNOTs. Assuming~$p$ single qubit rotation gates and with a slight abuse of notation,
\begin{equation}\label{decomp}
    u_\Theta (x) = g(\phi_1(x),\cdots,\phi_{p} (x)) = g(\phi(x)),
\end{equation}
where each $\phi_i$ is the rotation angle for a particular single qubit rotation-like gate. 
Note that this decomposition is clearly not unique, as one could choose a decomposition based on which has the lowest~$\xi$ value.
Basic applications of the chain rule then yield
%\begin{subequations}
%  \begin{align}    
%          \frac{\partial u_\Theta }{\partial x_a} &= \sum_i \frac{\partial g}{\partial \phi_i } \frac{\partial \phi_i}{\partial x_a}\label{eq-a},\\
%       \frac{\partial u_\Theta }{\partial x_a \partial x_b} &= \sum_{i,j} \frac{\partial g}{\partial \phi_j \partial \phi_i}\frac{\partial \phi_i}{\partial x_a}\frac{\partial \phi_j}{\partial x_b} + \sum_i \frac{\partial g}{\partial \phi_i} \frac{\partial \phi_i}{\partial x_a \partial x_b}.\label{eq-b}
%  \end{align}
%\end{subequations}

%(a,b) -> (i,j)
%(i,j) -> (a,b)
\begin{subequations}
  \begin{align}    
          \frac{\partial u_\Theta }{\partial x_i} &= \sum_a \frac{\partial g}{\partial \phi_a } \frac{\partial \phi_a}{\partial x_i}\label{eq-a},\\
       \frac{\partial u_\Theta }{\partial x_i \partial x_j} &= \sum_{a,b} \frac{\partial g}{\partial \phi_b \partial \phi_a}\frac{\partial \phi_a}{\partial x_i}\frac{\partial \phi_b}{\partial x_j} + \sum_a \frac{\partial g}{\partial \phi_a} \frac{\partial \phi_a}{\partial x_i \partial x_j}.\label{eq-b}
  \end{align}
\end{subequations}

We then split the sum over $a,b$ up resulting in
%(a,b) -> (i,j)
%(i,j) -> (a,b)
$$
\frac{\partial u_\Theta }{\partial x_i \partial x_j} = \sum_{a \neq b} \frac{\partial g}{\partial \phi_b \partial \phi_a}\frac{\partial \phi_a}{\partial x_i}\frac{\partial \phi_b}{\partial x_j} + \sum_a \frac{\partial g}{\partial \phi_a\partial \phi_a}\frac{\partial \phi_a}{\partial x_i}\frac{\partial \phi_a}{\partial x_j} + \sum_a \frac{\partial g}{\partial \phi_a} \frac{\partial \phi_a}{\partial x_i \partial x_j}.
$$
In the first summation there are 
$(\alpha_i \alpha_j - \alpha_{i,j})$ terms, 
so applying the standard parameter shift rule equation~\eqref{eq:paramshift} twice results in $4(\alpha_i \alpha_j - \alpha_{i,j})$ evaluations of~$u_\Theta$.
In the second summation there are~$\alpha_{a,b}$ terms, so the optimised parameter shift rule~\eqref{eq:2} results in $3\alpha_{i,j}$ evaluations of~$u_\Theta$.
We use this parameter shift rule as it provides all of the quantum gate partial derivatives needed for the third summation too.

For the $p$-Laplace operator~\eqref{eq:LaplaceOperator}, we require all mixed partial derivatives, that is the derivatives for all pairs $(i,j)$; as a result the number of operators needed to evaluate all these derivatives of~$u_\Theta$ is equal to
\begin{equation}
   \sum_{i,j=1}^d \left( 4(\alpha_i \alpha_j - \alpha_{i,j}) + 3\alpha_{i,j} \right) =    \sum_{i,j=1}^d \left( 4\alpha_i \alpha_j - \alpha_{i,j}  \right),
\end{equation}
where we have assumed the derivatives do not necessarily commute.
Assuming they do, we then only need to sum over pairs with $i \leq j$, as
\begin{equation}
 \sum_{i,j = 1, i \leq j}^d
 \left( 4\alpha_i \alpha_j - \alpha_{i,j}  \right).
\end{equation}
Note that the $n_r$ factor comes from the number of times the residual must be evaluated;
since boundary data appears with no derivatives, 
each sample then only requires one~$u_\Theta$ evaluation.
\end{proof}

\begin{lemma}
For the variational loss function formulation~\eqref{eq:EmpLossFunctionWeak},
\begin{equation}\label{weakplaplace1}
\xi\left(\mathcal{L}^{\text{Var}}_{n_e,n_r}\right) = n_r\left( 1 + 2\sum_{i=1}^{d} \alpha_i\right) + n_e.
\end{equation}
\end{lemma}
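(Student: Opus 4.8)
The plan is to read off from the variational loss~\eqref{eq:EmpLossFunctionWeak} exactly which quantities must be evaluated on the quantum device at each sample point, and then count the corresponding number of~$u_\Theta$ evaluations. The key structural observation --- and the reason this count is so much smaller than the one in Lemma~\ref{proofcomplex1} --- is that the variational integrand $\frac{1}{p}|\nabla u_\Theta|^p - f\, u_\Theta$ involves only the \emph{gradient} and the \emph{value} of~$u_\Theta$, with no second-order derivatives. Hence only the first-order parameter shift rule~\eqref{eq:paramshift} is needed, and none of the mixed-derivative machinery (nor any of the $\alpha_{i,j}$ bookkeeping) of the strong-form case appears.

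First I would dispose of the boundary term. Each summand $|u_\Theta(X^{(e)}_i) - h(X^{(e)}_i)|$ requires a single evaluation of~$u_\Theta$ at~$X^{(e)}_i$, since~$h$ is a known function evaluated classically; this contributes exactly~$n_e$ evaluations. Next I would count the cost of one residual summand. Using the decomposition $u_\Theta(x) = g(\phi(x))$ from~\eqref{decomp} and the chain rule~\eqref{eq-a}, the $i$-th gradient component is $\partial_{x_i} u_\Theta = \sum_a (\partial_{\phi_a} g)(\partial_{x_i}\phi_a)$, where the only nonvanishing terms are those for which the rotation angle~$\phi_a$ depends on~$x_i$; by definition there are~$\alpha_i$ such terms. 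Each factor $\partial_{\phi_a} g$ is obtained from~\eqref{eq:paramshift}, costing two evaluations of~$u_\Theta$, while the coefficients $\partial_{x_i}\phi_a$ are computed classically. Thus the $i$-th component costs $2\alpha_i$ evaluations and the full gradient costs $2\sum_{i=1}^d \alpha_i$. The value $u_\Theta(X^{(r)}_i)$ needed for the $f\, u_\Theta$ term is one further evaluation, shared across the whole residual sample; it is genuinely separate because~\eqref{eq:paramshift} returns only the shifted quantities $g(x \pm s\ef_j)$ and never $g(x)$ itself. The quantity $|\nabla u_\Theta|^p$ is then assembled classically from the gradient components. Each residual point therefore costs $1 + 2\sum_{i=1}^d \alpha_i$ evaluations; multiplying by~$n_r$ and adding the~$n_e$ boundary evaluations yields~\eqref{weakplaplace1}.

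The routine parts are the chain rule and the two-evaluation cost of~\eqref{eq:paramshift}; the part that needs the most care will be the bookkeeping, namely verifying that no second-order shifts are triggered and that the value $u_\Theta$ must be counted on top of the gradient evaluations rather than being recovered for free. A secondary point worth flagging is that the count $2\sum_{i}\alpha_i$ treats the~$d$ gradient components independently: if a single rotation angle depends on several coordinates, its two shifted evaluations could in principle be reused across components, so~\eqref{weakplaplace1} reports the cost without exploiting such sharing, consistent with the convention adopted in Lemma~\ref{proofcomplex1}.
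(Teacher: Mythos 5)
Your proof is correct and takes essentially the same route as the paper's: the same decomposition~\eqref{decomp}, the chain rule~\eqref{eq-a}, and the basic parameter shift rule~\eqref{eq:paramshift}, counting $2\alpha_i$ evaluations per gradient component plus one evaluation for the function value at each of the $n_r$ interior points, and $n_e$ evaluations for the boundary penalisation term. Your additional remarks --- that $u_\Theta(x)$ itself is not recoverable from the shifted evaluations, and that the count $2\sum_{i}\alpha_i$ forgoes possible reuse when one rotation angle depends on several coordinates --- simply make explicit conventions the paper's proof adopts implicitly.
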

\begin{proof}
The minimisation statement that arises from the variational formulation~\eqref{weak11} involves 
both the gradient and the function value for each sample point.
Using the same decomposition as before~\eqref{decomp}, the chain rule~\eqref{eq-a} and the basic parameter shift rule~\eqref{eq:paramshift}
the gradient involves exactly
$2\sum_{i=1}^{d} \alpha_i$
evaluations of~$u_\Theta$.
The function value results in only one~$u_\Theta$ evaluation.
This is done for each sample point inside the domain resulting in
\begin{equation}
\xi\left(\mathcal{L}^{\text{Var}}_{n_e,n_r}\right) = n_r\left( 1 + 2\sum_{i=1}^{d} \alpha_i\right).
\end{equation}
The extra~$n_e$ term in~\eqref{weakplaplace1} is a result of the boundary penalisation term in~\eqref{eq:EmpLossFunctionWeak}.
\end{proof}
\begin{lemma}\label{lem:GaussSmoothError}
If the PDE trial solution is the Gaussian smoothed output of a quantum network then, using~$K$ classical samples of Gaussian noise,
\begin{align}\label{Gausscomplex1}
    \xi\left(\mathcal{L}_{n_e,n_r}^{\text{Smooth}}\right) = K(5n_r+n_e).
\end{align}
\end{lemma}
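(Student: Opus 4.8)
The plan is to reduce the evaluation count of $\mathcal{L}_{n_e,n_r}^{\text{Smooth}}$ to the cost of the Monte-Carlo estimators for $f$, $\nabla f$ and $\Delta f$ from the Gaussian-smoothing subsection, and then to sum these costs over the boundary and residual sample points. The structural point to confirm first is that, unlike in Lemma~\ref{proofcomplex1}, smoothing replaces the parameter-shift differentiation of every gate by shifts of the spatial argument, so the final count should depend only on $K$, $n_r$ and $n_e$ and be completely independent of the gate counts $\alpha_i$ and $\alpha_{i,j}$. Here $\mathcal{L}_{n_e,n_r}^{\text{Smooth}}$ is understood as the PINN loss~\eqref{eq:EmpLossFunction} in which $u_\Theta$ is replaced by its smoothing $f$ and the derivatives are evaluated through the estimators~\eqref{eq:cavdel} and its Laplacian analogue.

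First I would handle the boundary term. The smoothed trial solution at a boundary node is $f(X^{(e)}_i) = \EE_{\delta}[u_\Theta(X^{(e)}_i+\delta)]$, whose $K$-sample estimate $\frac{1}{K}\sum_{k=1}^{K} u_\Theta(X^{(e)}_i+\delta_k)$ requires exactly $K$ evaluations of $u_\Theta$. Since the boundary penalisation carries no derivatives, each of the $n_e$ boundary nodes contributes $K$ evaluations, for a total of $K n_e$, matching the second summand in the claimed formula.

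Next the residual term. At each interior node the residual requires the smoothed gradient and the smoothed Laplacian (or Hessian). The antithetic/control-variate estimator~\eqref{eq:cavdel} assembles the entire vector $\widetilde{\nabla f}$ from the values $u_\Theta(x\pm\delta_k)$, i.e. $2K$ evaluations, while the second-order estimator is built from $u_\Theta(x+\delta_k)$, $u_\Theta(x)$ and $u_\Theta(x-\delta_k)$, contributing $3K$ evaluations. Treating the two estimators as evaluated on separate calls then gives $5K$ evaluations per residual node and $5K n_r$ in total, and adding the boundary contribution yields $\xi(\mathcal{L}_{n_e,n_r}^{\text{Smooth}}) = K(5n_r + n_e)$ as claimed.

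The delicate step, and the one I would state most carefully, is precisely this $5K$ count: one must fix the convention governing reuse of evaluations between the gradient and the second-order estimators, since naive sharing of the shifted values $u_\Theta(x\pm\delta_k)$ and the centre value $u_\Theta(x)$ would collapse the count to $2K+1$. The $5K$ figure is the honest cost when the two estimators are run on independent evaluation calls, which is natural if one wants unbiased estimates of the nonlinear products of derivatives in the operator. I would therefore make the estimator bookkeeping explicit at the outset, commit to counting $2K$ for the gradient and $3K$ for the second derivative (absorbing the single centre evaluation $u_\Theta(x)$ into the latter), after which the remaining arithmetic is immediate.
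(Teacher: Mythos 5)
Your proof matches the paper's argument essentially verbatim: the paper likewise invokes the antithetic/control-variate estimators for the smoothed gradient and Hessian, counts the five $u_\Theta$ terms appearing across the two expressions at $K$ evaluations each per residual node, and adds $K$ evaluations per boundary node to obtain $K(5n_r+n_e)$. Your explicit remark on the no-reuse convention (ruling out the collapse to $2K+1$ via shared shifted values) is a point the paper leaves implicit, but it does not change the route of the proof.
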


\begin{proof}
When the output of the quantum network is Gaussian smoothed~\eqref{smoothing1}, the Hessian matrix $\bm H f_\Theta$ can be calculated using
\begin{equation}\label{baselaplacian}
    \bm{H} f_\Theta(x) =\EE\left[\left(\frac{\delta \delta^{\top}-\sigma^2 \Id}{\sigma^4}\right) u_\Theta(x+\delta)\right],
    \qquad\text{for all }x \in \Omega,
\end{equation}
with $\delta \sim \Nn\left(0, \sigma^2 \Id\right)$, as proven in~\cite{he2023learning}. The general variance of this estimator can be reduced by applying the control variate and antithetic variable method. For the additive control variate we use $u_\Theta(x)$ as a baseline which turns the estimate~\eqref{baselaplacian} into
\begin{equation}\label{baselaplacian2}
    \bm{H} f_\Theta(x) =\EE\left[\left(\frac{\delta \delta^{\top}-\sigma^2 \Id}{\sigma^4}\right)\Big\{u_\Theta(x+\delta)-u_\Theta(x)\Big\}\right],
    \qquad\text{for all }x \in \Omega.
\end{equation}
just as authors do in~\cite{he2023learning}. Notice the estimate in~\eqref{baselaplacian} and~\eqref{baselaplacian2} is invariant when $\delta$ is substituted with $-\delta$, averaging this new estimator with the one in~\eqref{baselaplacian2} results in
\begin{equation}
        \bm{H} f_\Theta(x) =\EE\left[\left(\frac{\delta \delta^{\top}-\sigma^2 \Id}{2\sigma^4}\right)
        \Big\{u_\Theta(x+\delta)+ u_\Theta(x-\delta) - 2u_\Theta(x)\Big\}\right].
\end{equation}

The same technique can be applied to the estimator for the gradient resulting in
\begin{equation}
    \nabla_x f_\Theta(x) = \EE\left[\frac{\delta}{2 \sigma^2}(u_\Theta(x+\delta)-u_\Theta(x-\delta))\right].
\end{equation}
There are a total of five $u_\Theta$ terms in both expressions, the boundary term will feature a single $f_\Theta$ term for each sample point. Assuming we approximate each classical expectation with $K$ iid Gaussian samples the expression~\eqref{Gausscomplex1} follows. 
\end{proof}

\begin{lemma}
We can combine a Gaussian smoothed trial solution with the variational loss function reformulation \eqref{eq:EmpLossFunctionWeak} and \eqref{weak11} resulting in
\begin{align}
    \xi\left(\mathcal{L}_{n_e,n_r}^{\text{Var,Smooth}}\right) = K(3n_r+n_e).
\end{align}
\end{lemma}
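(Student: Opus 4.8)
The plan is to assemble the claim from two pieces already in hand: the structure of the variational loss~\eqref{eq:EmpLossFunctionWeak} arising from~\eqref{weak11}, and the Gaussian-smoothed Monte Carlo estimators derived in the proof of Lemma~\ref{lem:GaussSmoothError}. The conceptual point, which drives the reduction from the factor~$5$ in~\eqref{Gausscomplex1} to the factor~$3$ claimed here, is that the variational formulation requires at each interior sample only the gradient $\nabla f_\Theta$ and the value $f_\Theta$, and crucially never the Hessian; it thus trades the three-term Hessian estimator for a cheap single-term value estimator while retaining the same two-term gradient estimator.

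First I would itemise the quantities on which the loss depends. Reading off~\eqref{eq:EmpLossFunctionWeak}, each interior sample $X^{(r)}_i$ enters through $\frac{1}{p}|\nabla f_\Theta|^p - f\, f_\Theta$, so it demands estimates of both $\nabla f_\Theta(X^{(r)}_i)$ and $f_\Theta(X^{(r)}_i)$; each boundary sample $X^{(e)}_i$ enters only through $|f_\Theta - h|$ and hence requires only the value $f_\Theta(X^{(e)}_i)$. Next I would recall the relevant smoothed estimators from Lemma~\ref{lem:GaussSmoothError}: the antithetic control-variate gradient estimator $\nabla_x f_\Theta(x) = \EE\left[\frac{\delta}{2\sigma^2}\left(u_\Theta(x+\delta) - u_\Theta(x-\delta)\right)\right]$ involves exactly two distinct network evaluations, $u_\Theta(x+\delta)$ and $u_\Theta(x-\delta)$, whereas $f_\Theta(x) = \EE[u_\Theta(x+\delta)]$ involves a single evaluation.

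Approximating every classical expectation with the same $K$ iid Gaussian samples, the interior contribution then costs $(2+1)K = 3K$ evaluations of $u_\Theta$ per point, and the boundary contribution costs $K$ evaluations per point; summing over the two datasets yields $\xi\left(\mathcal{L}^{\text{Var,Smooth}}_{n_e,n_r}\right) = 3K n_r + K n_e = K(3n_r + n_e)$, as asserted. I anticipate no substantive obstacle, since this is a counting argument layered on estimators that are already established; the only point demanding care is the bookkeeping convention. Consistently with Lemma~\ref{lem:GaussSmoothError}, where the Hessian and gradient terms are tallied separately to give the five of~\eqref{Gausscomplex1}, the value and gradient estimators should be counted without reusing their common evaluation $u_\Theta(x+\delta)$; it is precisely this convention that produces three interior evaluations rather than two, and I would state it explicitly to keep the present result in line with the preceding lemmas.
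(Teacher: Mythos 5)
Your proposal is correct and follows essentially the same route as the paper's proof: identify that the variational loss needs only the gradient (two-term antithetic estimator) and the function value (one term) at each interior point, count $3K$ evaluations per interior sample and $K$ per boundary sample, and sum. Your explicit remark on the term-counting convention (no reuse of the shared evaluation $u_\Theta(x+\delta)$ across estimators, consistent with the factor~$5$ in Lemma~\ref{lem:GaussSmoothError}) is a careful reading of a point the paper leaves implicit, but it does not change the argument.
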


\begin{proof}\label{proofgausweak}
For each sample point the variational formulation~\eqref{weak11} only involves a gradient term and the function value 
and, proceeding as in the proof of Lemma~\ref{lem:GaussSmoothError},
this produces~$3K$ evaluations of~$u_\Theta$, where we again assume~$K$ iid Gaussian samples. 
Once more the boundary penalisation element shown in~\eqref{eq:EmpLossFunctionWeak} produces the~$Kn_e$ term.
\end{proof}

%%%%%%%%%%%%%%%%%%%%%%%%%%%%%%%%%%%%%%%%%%%%%%%%%%%
%%%%%%%%%%%%%%%%%%%%%%%%%%%%%%%%%%%%%%%%%%%%%%%%%%%
\subsubsection{The $p=2$ case}
The $p=2$ Laplace equation reduces to the Poisson equation, and the following holds:

\begin{lemma}\label{proofcomplexlaplace}
For the Poisson equation,
the complexity metrics for the prototypical loss formulation~\eqref{eq:EmpLossFunction} and the loss function using the variational form~\eqref{eq:EmpLossFunctionWeak} read
\begin{align}
\xi(\mathcal{L}_{n_e,n_r}) &= n_r\left(\sum_{i=1}^d \left( 4\alpha^2_i - \alpha_i \right)\right) + n_e, \\ \xi\left(\mathcal{L}^{\text{Var}}_{n_e,n_r}\right) &= n_r\left( 1 + 2\sum_{i=1}^{d} \alpha_i\right) + n_e.
\end{align}
\end{lemma}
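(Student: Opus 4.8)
The plan is to obtain both identities as specialisations of the $p \neq 2$ results already established in Lemma~\ref{proofcomplex1} and equation~\eqref{weakplaplace1}, the only change being which derivatives of~$u_\Theta$ the Poisson operator and its variational form actually demand. The Laplacian $\Delta u = \sum_{i=1}^d \partial^2 u/\partial x_i^2$ contains no genuinely mixed second derivatives, so the double sum over pairs $(i,j)$ appearing in Lemma~\ref{proofcomplex1} collapses to its diagonal $i=j$; the variational functional, on the other hand, never involves anything beyond the gradient, so its cost is insensitive to the value of~$p$.

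For the prototypical loss~\eqref{eq:EmpLossFunction}, I would start from the decomposition~\eqref{decomp} into single-qubit rotation-like gates and apply the chain rule~\eqref{eq-b} with $i=j$. Splitting the inner sum over gate angles into off-diagonal ($a \neq b$) and diagonal ($a=b$) parts exactly as in the proof of Lemma~\ref{proofcomplex1}, the off-diagonal part has $\alpha_i^2 - \alpha_{i,i}$ terms, each costing four evaluations via two applications of~\eqref{eq:paramshift}, while the diagonal part has $\alpha_{i,i}$ terms, each costing three evaluations via the optimised rule~\eqref{eq:2}; crucially, \eqref{eq:2} simultaneously returns the first-order angle derivatives needed for the third summation, so that term incurs no further evaluations. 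The key bookkeeping observation is that, by definition, a gate carrying both $x_i$ and $x_i$ dependence is simply a gate carrying $x_i$ dependence, i.e. $\alpha_{i,i} = \alpha_i$. Hence the cost of the $i$-th unmixed second derivative is $4(\alpha_i^2 - \alpha_i) + 3\alpha_i = 4\alpha_i^2 - \alpha_i$. Summing over $i=1,\ldots,d$, multiplying by $n_r$ for the residual samples, and adding $n_e$ for the boundary penalisation (one $u_\Theta$ evaluation per boundary point, no derivatives) yields the first identity.

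For the variational loss~\eqref{eq:EmpLossFunctionWeak}, the point is that setting $p=2$ in~\eqref{weak11} gives the functional $\half\int_\Omega |\nabla v|^2 \D x - \int_\Omega f v \D x$, whose integrand involves only the gradient and the function value of the trial solution, the very same data as the general $p$-Laplace variational form. The derivation of~\eqref{weakplaplace1} therefore applies verbatim: via the chain rule~\eqref{eq-a} and a single parameter shift~\eqref{eq:paramshift} per gate-coordinate dependence, the gradient at each interior point costs $2\sum_{i=1}^d \alpha_i$ evaluations, the function value costs one, and the boundary term contributes $n_e$, giving the second identity unchanged from the $p\neq 2$ case.

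The only real obstacle is the bookkeeping rather than any new idea: one must confirm that the mixed second derivatives genuinely drop from the Laplacian so that the diagonal restriction $i=j$ of the general formula is faithful, establish the identity $\alpha_{i,i}=\alpha_i$ directly from the definition of these auxiliary counts, and verify that the first-order angle derivatives produced as a by-product of~\eqref{eq:2} indeed cover the third summation so that no evaluations are double-counted.
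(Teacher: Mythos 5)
Your proposal is correct and follows essentially the same route as the paper: specialise the decomposition and chain-rule argument of Lemma~\ref{proofcomplex1} to the diagonal case $i=j$, charge $4(\alpha_i^2-\alpha_i)$ evaluations to the off-diagonal angle pairs via two applications of~\eqref{eq:paramshift} and $3\alpha_i$ to the diagonal ones via~\eqref{eq:2} (whose by-product first-order angle derivatives cover the remaining summation at no extra cost), and reuse the $p\neq 2$ argument verbatim for the variational loss. Your explicit justification of the bookkeeping identity $\alpha_{i,i}=\alpha_i$ is a welcome clarification of a step the paper uses only implicitly.
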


\begin{proof}
The proof of $\xi\left(\mathcal{L}^{\text{Var}}_{n_e,n_r}\right)$ is the exact same as the case with $p \neq 2$. The first statement's proof is very similar to that in Lemma~\ref{proofcomplex1} however we include it for the sake of completeness. Using the same decomposition as in Lemma~\ref{proofcomplex1} we have

$$\frac{\partial u_\Theta }{\partial x_i \partial x_j} = \sum_{a \neq b} \frac{\partial g}{\partial \phi_b \partial \phi_a}\frac{\partial \phi_a}{\partial x_i}\frac{\partial \phi_b}{\partial x_i} + \sum_a \frac{\partial g}{\partial \phi_a \partial \phi_a}\frac{\partial \phi_a}{\partial x_i}\frac{\partial \phi_a}{\partial x_i} + \sum_a \frac{\partial g}{\partial \phi_a}\frac{\partial}{\partial x_i} \frac{\partial \phi_a}{\partial x_i}.
$$

For the first term we apply the basic parameter shift rule twice to the $\alpha^2_i - \alpha_i$ terms resulting in $4(\alpha^2_i - \alpha_i)$ evaluations of $u_\Theta$ . For the second term we apply the optimised parameter shift rule given in~\eqref{eq:2}, producing $3\alpha_i$ evaluations of $u_\Theta$ and also providing all of the quantum derivatives needed for the third term. Summing over all~$i$, 
\begin{equation}
    \sum_{i=1}^d \left( 4(\alpha^2_i - \alpha_i) + 3\alpha_i\right) =\sum_{i=1}^d \left( 4\alpha^2_i - \alpha_i \right).
\end{equation}
We then add the boundary term and multiply by the relevant sample sizes to produce the original statement. 
\end{proof}
\begin{lemma}
If the PDE trial solution is Gaussian smoothed, then
\begin{align}\label{poissgaus}
    \xi\left(\mathcal{L}_{n_e,n_r}^{\text{Smooth}}\right) = K(3n_r + n_e).
\end{align}
\end{lemma}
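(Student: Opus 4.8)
The plan is to mirror the proof of Lemma~\ref{lem:GaussSmoothError}, exploiting the fact that for the Poisson equation the residual operator $\mathscr{F}(u_\Theta) = \Delta u_\Theta + f$ only involves the Laplacian, and neither the full Hessian nor the gradient. First I would recall that when the trial solution is the Gaussian smoothing~\eqref{smoothing1} of the quantum network output, the Laplacian admits the antithetic, control-variate estimator already displayed in the Gaussian-smoothing section,
$$
\widetilde{\Delta f_\Theta}(x)
= \frac{1}{K}\sum_{k=1}^{K}\frac{\|\delta_k\|^2 - 2\sigma^2}{2\sigma^4}
\Big(u_\Theta(x+\delta_k) - 2u_\Theta(x) + u_\Theta(x-\delta_k)\Big),
$$
with $\delta_k \sim \Nn(0,\Id\sigma^2)$ iid. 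The central observation is that this expression involves exactly three distinct $u_\Theta$ evaluations per Gaussian sample, namely $u_\Theta(x+\delta_k)$, $u_\Theta(x)$ and $u_\Theta(x-\delta_k)$, in contrast with the five required in Lemma~\ref{lem:GaussSmoothError}, where the $p\ne2$ operator~\eqref{eq:LaplaceOperator} forced us to estimate both the full Hessian and the gradient.

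Next I would count the residual contribution. Since the Poisson residual at an interior point requires only $\Delta u_\Theta = \mathrm{tr}(\bm{H} f_\Theta)$, which is obtained directly as the trace of the Hessian estimator~\eqref{baselaplacian2} without any separate gradient computation, each of the $n_r$ residual samples costs $3K$ evaluations of~$u_\Theta$ (three per Gaussian draw, over~$K$ draws), giving $3K n_r$ in total. For the boundary term I would argue exactly as in Lemma~\ref{lem:GaussSmoothError}: the smoothed trial solution on~$E$ is $f_\Theta(x) = \EE_\delta[u_\Theta(x+\delta)]$, estimated by $\frac{1}{K}\sum_{k} u_\Theta(x+\delta_k)$, which costs~$K$ evaluations per boundary point and hence $K n_e$ overall. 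Adding the two contributions yields $\xi(\mathcal{L}^{\text{Smooth}}_{n_e,n_r}) = 3K n_r + K n_e = K(3n_r + n_e)$, as claimed.

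The main subtlety — more a matter of bookkeeping than a genuine obstacle — is confirming that, unlike the $p\ne2$ case, no separate gradient estimator (and thus no additional evaluations) is needed, so that the count drops from five to three precisely because the Poisson operator collapses to the trace of the Hessian. One must also track the three $u_\Theta$ terms per sample consistently with the counting convention adopted in Lemma~\ref{lem:GaussSmoothError}, and verify that the antithetic pair $u_\Theta(x\pm\delta_k)$ together with the control variate $u_\Theta(x)$ are all accounted for. Everything else is identical to the smoothed $p\ne2$ argument, so no new ideas are required.
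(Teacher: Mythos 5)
Your proof is correct and follows essentially the same route as the paper: both invoke the Gaussian-smoothed Laplacian estimator with antithetic sampling and the $u_\Theta(x)$ control variate, count its three $u_\Theta$ evaluations per Gaussian draw at each of the $n_r$ interior points, and add one evaluation per draw at each of the $n_e$ boundary points to obtain $K(3n_r+n_e)$. Your added remark contrasting this with the five-term count of the $p\neq 2$ case (no separate gradient estimator is needed since the Poisson operator is the trace of the Hessian) is accurate and consistent with the paper's reasoning.
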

\begin{proof}
    The Gaussian Laplacian term is given by~\cite[Equation (11)]{he2023learning}
\begin{equation}\label{gausslaplace1}
    \Delta f_\Theta(x) = \EE\left[\left(\frac{\|\delta\|^2-\sigma^2 d}{2 \sigma^4}\right)(u_\Theta(x+\delta)+u_\Theta(x-\delta)-2 u_\Theta(x))\right].
\end{equation}
There are a total of three~$u_\Theta$ terms in~\eqref{gausslaplace1}, the boundary term will feature a single $f_\Theta$ term for each sample point. Assuming we approximate each classical expectation with $K$ iid Gaussian samples the expression~\eqref{poissgaus} follows. 
\end{proof}

%%%%%%%%%%%%%%%%%%%%%%%%%%%%%%%%%%%%%%%%%%%%
\subsection{Heat equation}
We now consider the heat equation, solution to the system
\begin{align}\label{eq:heat}
 \begin{cases}
 \left(\Delta - \dt\right) u(t,x) = 0, & \text{for all }x \in \Omega , t \in [0,T),\\ 
 u(0,x) = h(x), & \text{for all } x \in \Omega,\\
 u(t,x) = g(x), & \text{for all }x \in \partial\Omega,  t \in [0,T).
 \end{cases}
\end{align}
Similarly to above, we define $\alpha_t$ to be the number single qubit rotation-like gates with time dependence.

\begin{lemma}
The complexity with the loss function~\eqref{eq:EmpLossFunction} reads
\begin{align}
    \xi(\mathcal{L}_{n_e,n_r}) &= n_r\left(2\alpha_t+\sum_{i=1}^d \left( 4\alpha^2_i - \alpha_i \right)\right) + n_e.
\end{align}
If the network is Gaussian smoothed then
\begin{align}\label{Gausscomplex5}
    \xi\left(\mathcal{L}_{n_e,n_r}^{\text{Smooth}}\right) = K(5n_r+n_e),
\end{align}
\end{lemma}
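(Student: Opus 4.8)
The plan is to treat the two assertions separately, in each case decomposing the heat operator $\Delta-\dt$ into its spatial Laplacian part $\Delta=\sum_{i=1}^d\partial_{x_i}^2$ and its first-order temporal part $\dt$, costing each contribution and then combining. For the non-smoothed loss~\eqref{eq:EmpLossFunction}, I would first observe that at each interior sample point the residual $(\Delta-\dt)u_\Theta$ requires exactly the same unmixed second-order spatial derivatives as the Poisson case. Hence the Laplacian contribution is identical to the one computed in Lemma~\ref{proofcomplexlaplace}, namely $\sum_{i=1}^d(4\alpha_i^2-\alpha_i)$ evaluations of~$u_\Theta$, and I would import that count directly rather than rederiving it.

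It then remains to cost the single time derivative $\dt u_\Theta$. Using the decomposition~\eqref{decomp} and the chain rule~\eqref{eq-a}, one has $\dt u_\Theta=\sum_a\frac{\partial g}{\partial\phi_a}\frac{\partial\phi_a}{\partial t}$, where the factor $\frac{\partial\phi_a}{\partial t}$ vanishes except for the $\alpha_t$ rotation-like gates carrying time dependence. For each such gate the basic parameter shift rule~\eqref{eq:paramshift} produces $\frac{\partial g}{\partial\phi_a}$ with two evaluations of~$u_\Theta$, giving $2\alpha_t$ evaluations in total; since the operator is only first order in time, the optimised second-order rule~\eqref{eq:2} is not needed here. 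Adding the two contributions gives $2\alpha_t+\sum_{i=1}^d(4\alpha_i^2-\alpha_i)$ evaluations per interior sample. Multiplying by the $n_r$ residual points and appending the $n_e$ boundary and initial evaluations, each of which appears with no derivative and so costs a single $u_\Theta$ evaluation, yields the first display.

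For the Gaussian-smoothed case I would follow the bookkeeping already established in Lemma~\ref{lem:GaussSmoothError} and in the Poisson smoothed lemma. The spatial Laplacian of the smoothed trial solution is estimated by~\eqref{gausslaplace1}, which involves the three terms $u_\Theta(x+\delta)$, $u_\Theta(x-\delta)$ and $u_\Theta(x)$; the time derivative is estimated by the antithetic gradient formula~\eqref{eq:cavdel} applied in the temporal coordinate, contributing the two terms $u_\Theta(x+\delta)$ and $u_\Theta(x-\delta)$. Counting these as in the earlier smoothed lemmas gives five $u_\Theta$ terms in total; approximating each classical expectation with $K$ iid Gaussian samples produces $5K$ evaluations per interior point, while each boundary sample contributes one smoothed value costing $K$ evaluations. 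Hence $\xi(\mathcal{L}_{n_e,n_r}^{\text{Smooth}})=K(5n_r+n_e)$, which is~\eqref{Gausscomplex5}.

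The main thing to get right is consistency of the counting convention. In the smoothed setting the Laplacian and gradient estimators share the draws $u_\Theta(x\pm\delta)$, so a deduplicated count would give three rather than five terms; I would simply adopt the same non-deduplicated convention used for the $p\neq2$ smoothed lemma, counting the terms appearing in each estimator expression separately, which is what keeps the result coherent with Lemma~\ref{lem:GaussSmoothError}. In the non-smoothed setting the analogous subtlety is whether the time-derivative evaluations could be recycled from the Laplacian computation; I would keep them separate and charge the clean $2\alpha_t$ for $\dt u_\Theta$, which is exactly what the stated formula reflects.
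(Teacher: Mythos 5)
Your proposal is correct and follows essentially the same route as the paper: import the $\sum_{i=1}^d(4\alpha_i^2-\alpha_i)$ count from the $p=2$ Laplace case, charge $2\alpha_t$ for $\dt u_\Theta$ via the basic shift rule~\eqref{eq:paramshift}, and in the smoothed case count the three terms of~\eqref{gausslaplace1} plus the two terms of the antithetic time-derivative estimator (the paper writes this with joint noise $(\delta_t,\delta_x)$ in~\eqref{Complexfullfirst}, which matches your use of~\eqref{eq:cavdel} in the temporal coordinate). Your explicit remark that the five-term count deliberately does not deduplicate the shared draws $u_\Theta(\cdot\pm\delta)$ is consistent with the convention the paper already uses in Lemma~\ref{lem:GaussSmoothError}.
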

\begin{proof}
    For the first statement we apply the working from proof of the $p$-Laplace with $p=2$ followed by the basic parameter shift rule for $\dt u_\Theta$ which produces the $2\alpha_t$ factor. 
    For the Gaussian statement we use the Laplacian term in~\eqref{gausslaplace1} and the corresponding variance reduced equation reads
    \begin{equation}\label{Complexfullfirst}
        \dt f_\Theta=\EE_{(\delta_t,\delta_x) \sim \Nn(0,\Id\sigma^{2})}
        \left[\frac{\delta_t}{2 \sigma^2}\Big(u_\Theta(t + \delta_t,x+\delta_x)-u_\Theta(t - \delta_t,x-\delta_x)\Big)\right]
    \end{equation}
    for the time derivative. Counting terms we arrive at~\eqref{Gausscomplex5}.
\end{proof}
Since the other components of the Gaussian noise in~\eqref{Complexfullfirst} all return first-order derivatives, then
$$
(\dt f_\Theta,\nabla_x f_\Theta)=\EE_{(\delta_t,\delta_x) \sim \Nn(0,\Id\sigma^{2})}\left[\frac{(\delta_t,\delta_x)}{2 \sigma^2}(u_\Theta(t + \delta_t,x+\delta_x)-u_\Theta(t - \delta_t,x-\delta_x))\right].
$$

%%%%%%%%%%%%%%%%%%%%%%%%%%%%%%%%%%%%%%%%%%%%
\subsection{Hamilton–Jacobi equation}
Consider the classical linear-quadratic Gaussian (LQG) control problem in~$d$ dimensions, with the associated HJB equation given by~\cite{Han_2018}
\begin{align}\label{eq:HJB}
 \begin{cases}
 \left(\dt u + \Delta u - \mu \|\nabla_x u\|^2\right)(t,x)
 = 0, & \text{for all } x \in \RR^{d} , t \in [0,T),\\
 u(T,x) = h(x), & \text{for all }x\in \RR^{d}.
 \end{cases}
\end{align}
with $\mu \in \RR$. 
Similarly to the $p$-Laplace case above, we have the following:

\begin{lemma}
In this HJB framework, the complexity with the standard loss function~\eqref{eq:EmpLossFunction} reads
\begin{align}\label{complexHJB}
    \xi(\mathcal{L}_{n_e,n_r}) &= n_r\left(2\alpha_t+ 3\sum_{i=1}^{d} \alpha_i(2\alpha_i -1)\right) + n_e,
\end{align}
which simplifies, for a Gaussian smoothed network,
to
\begin{align}\label{GaussHJHB}
\xi\left(\mathcal{L}_{n_e,n_r}^{\text{Smooth}}\right) = K(5n_r+n_e),
\end{align}
\end{lemma}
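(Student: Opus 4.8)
The plan is to treat the HJB differential operator $\dt u + \Delta u - \mu\|\nabla_x u\|^2$ term by term, tallying the number of $u_\Theta$ evaluations each term forces through the parameter-shift machinery, exactly as in Lemmas~\ref{proofcomplex1} and~\ref{proofcomplexlaplace}. Writing $u_\Theta(x) = g(\phi_1(x),\dots,\phi_p(x))$ as in~\eqref{decomp}, the residual at each interior sample requires the first-order time derivative $\dt u_\Theta$, the spatial Laplacian $\Delta_x u_\Theta = \sum_{i=1}^d \partial_{x_i}^2 u_\Theta$, and the squared spatial gradient $\|\nabla_x u_\Theta\|^2 = \sum_{i=1}^d(\partial_{x_i} u_\Theta)^2$; the terminal condition $u(T,\cdot)=h$ carries no derivatives and so contributes a single evaluation per boundary sample, i.e. the $+n_e$ term.

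For the standard loss I would first handle the time derivative: via the chain rule $\dt u_\Theta = \sum_a (\partial_{\phi_a} g)\,\partial_t\phi_a$, so applying the basic parameter-shift rule~\eqref{eq:paramshift} to each of the $\alpha_t$ time-dependent gates gives $2\alpha_t$ evaluations. For the spatial part I would reuse the $p=2$ computation of Lemma~\ref{proofcomplexlaplace} for each $\partial_{x_i}^2 u_\Theta$, the key point being that the optimised rule~\eqref{eq:2} returns $\partial_{\phi_a} g$ and $\partial_{\phi_a}^2 g$ from the same three evaluations, so the gate first-derivatives feeding $\partial_{x_i} u_\Theta$ are produced alongside the diagonal gate second-derivatives of the Laplacian. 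The nonlinear term expands as $\|\nabla_x u_\Theta\|^2 = \sum_i\sum_{a,b}(\partial_{\phi_a}g)(\partial_{\phi_b}g)\,\partial_{x_i}\phi_a\,\partial_{x_i}\phi_b$, so per dimension one must account, on top of the diagonal gate contributions, for the off-diagonal gate pairs that also appear in $\partial_{x_i}^2 u_\Theta$; combining the diagonal and off-diagonal gate contributions demanded by both $\Delta_x u_\Theta$ and $\|\nabla_x u_\Theta\|^2$ and charging each against the three-evaluation cost of the optimised rule yields the per-dimension count $3\alpha_i(2\alpha_i-1)$. Summing over $i$, adding $2\alpha_t$, multiplying by $n_r$ and appending $n_e$ produces~\eqref{complexHJB}.

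For the Gaussian-smoothed trial solution I would proceed exactly as in the heat-equation and $p\neq 2$ proofs (Lemma~\ref{lem:GaussSmoothError}). The spatial Laplacian is estimated by the three-term antithetic/control-variate estimator~\eqref{gausslaplace1}, using $u_\Theta(\cdot+\delta)$, $u_\Theta(\cdot-\delta)$ and $u_\Theta(\cdot)$. The full gradient is estimated by the two-term estimator $\EE[\tfrac{\delta}{2\sigma^2}(u_\Theta(\cdot+\delta)-u_\Theta(\cdot-\delta))]$, whose components simultaneously deliver the time derivative $\dt f_\Theta$ and the spatial gradient $\nabla_x f_\Theta$ needed for $\|\nabla_x f_\Theta\|^2$. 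Counting these as five distinct $u_\Theta$ terms per interior sample and one $f_\Theta$ term per boundary sample, each approximated with $K$ iid Gaussian draws, gives $K(5n_r+n_e)$, which is~\eqref{GaussHJHB}.

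The routine part is the Gaussian estimate, which mirrors the heat equation once one notes that the shifted evaluations serve both gradient and Laplacian, so that the total $5$ is simply $3+2$. The delicate step, and the one I would spend most care on, is the standard-loss spatial count: one must track precisely which gate-level first- and second-derivative evaluations are generated by the optimised rule~\eqref{eq:2} and how they are shared between $\Delta_x u_\Theta$ and the nonlinear $\|\nabla_x u_\Theta\|^2$, so that the off-diagonal gate pairs are neither dropped nor double counted and the coefficient collapses exactly to $3\alpha_i(2\alpha_i-1)$.
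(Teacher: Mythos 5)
Your overall route is the same as the paper's: decompose the circuit as in~\eqref{decomp}, reuse the $p=2$ spatial counting of Lemma~\ref{proofcomplexlaplace}, observe that the three-point optimised rule~\eqref{eq:2} already furnishes every gate first derivative $\partial g/\partial\phi_a$ needed for $\nabla_x u_\Theta$ (so the nonlinear term $\|\nabla_x u_\Theta\|^2$ incurs no additional expectations), add $2\alpha_t$ evaluations from the basic rule~\eqref{eq:paramshift} for $\dt u_\Theta$, and append $n_e$ for the derivative-free terminal data. Your Gaussian count is also exactly the paper's: three evaluations for the spatial Laplacian estimator~\eqref{gausslaplace1} plus two for the first-order space--time estimator as in~\eqref{Complexfullfirst}, which delivers $\dt f_\Theta$ and $\nabla_x f_\Theta$ simultaneously, giving $K(5n_r+n_e)$; this part is sound and matches the paper.

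The gap is in the step you yourself flag as delicate: the coefficient $3\alpha_i(2\alpha_i-1)$ is asserted, not derived, and the accounting you offer for it would fail under scrutiny. ``Charging each [diagonal and off-diagonal] contribution against the three-evaluation cost of the optimised rule'' is illegitimate, because~\eqref{eq:2} yields only the unmixed gate second derivatives $g_{a,a}$ (together with the $g_a$); each mixed pair $g_{a,b}$ with $a\neq b$ requires the four-point iterated rule, exactly as counted in Lemmas~\ref{proofcomplex1} and~\ref{proofcomplexlaplace}. If you carry out the bookkeeping you actually describe --- $3\alpha_i$ evaluations for the $\alpha_i$ diagonal terms, $4(\alpha_i^2-\alpha_i)$ for the ordered off-diagonal pairs, and nothing extra for the gradient since its gate derivatives are shared --- you land on $4\alpha_i^2-\alpha_i$ per dimension, i.e.
\begin{equation}
\xi(\mathcal{L}_{n_e,n_r})=n_r\left(2\alpha_t+\sum_{i=1}^{d}\left(4\alpha_i^2-\alpha_i\right)\right)+n_e,
\end{equation}
which coincides with~\eqref{complexHJB} only when every $\alpha_i\leq 1$, since $3\alpha_i(2\alpha_i-1)-(4\alpha_i^2-\alpha_i)=2\alpha_i(\alpha_i-1)$. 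In effect your tally prices the $2\alpha_i^2-\alpha_i$ Hessian-plus-gradient gate contributions at three evaluations apiece, which simultaneously mis-prices the mixed derivatives and double-counts the shared first derivatives; it is reverse-engineered to the stated answer rather than derived from the shift rules. To be fair, the paper's own proof is no more explicit --- it simply invokes the Lemma~\ref{proofcomplexlaplace} count plus the sharing observation, which leads to the same smaller expression above --- so a complete argument would have to identify which extra evaluations (two per ordered mixed pair) the formula~\eqref{complexHJB} is charging for; your proposal as written does not supply this, and the step you spend most care on is precisely the one that does not close.
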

\begin{proof}
We apply the same decomposition and parameter shift applications as is done in Lemma~\ref{proofcomplexlaplace} and note that doing so provides all of the quantum expectations needed to calculate $\nabla_x u_\Theta$ too. The basic parameter shift rule is applied to find $\dt u_\Theta$, after multiplying each term by the number of relevant samples and adding the boundary element we recover~\eqref{complexHJB}.
For the Gaussian expression we apply~\eqref{complexHJB} and~\eqref{gausslaplace1}, counting terms adding the boundary term and multiplying by the relevant sample sizes produces~\eqref{GaussHJHB}.
\end{proof}

%%%%%%%%%%%%%%%%%%%%%%%%%%%%%%%%%%%%%%%%%%%%
\subsection{Explicit comparison}
With the random network introduced in Section~\ref{sec:specificnetworks}, assuming that the number~$n$ of qubits is a multiple of the problem dimension~$d$,
we then have, for each $i, j \in 1,\cdots,d$,
$$
\alpha_i = \frac{n}{d} + Mn 
    \qquad\text{and}\qquad
\alpha_{i,j} = Mn.
$$
For this particular class of networks, all second-order partial derivatives commute. 
For the $p$-Laplace equation with $p \neq 2$ and without Gaussian smoothing the 
complexity in Lemma~\ref{proofcomplex1} reads
\begin{align*}
\xi(\mathcal{L}_{n_e,n_r}) &= n_r
\left[\sum_{i,j = 1, i \leq j}^d
4\left[\frac{n}{d} + Mn\right]^2 - Mn \right] 
+ n_e
  = \frac{n_r}{2}d(d+1)\left[4\left[\frac{n}{d} + Mn\right]^2 - Mn\right] + n_e,\\
\xi\left(\mathcal{L}^{\text{Var}}_{n_e,n_r}\right) &= n_r\left( 1 + 2\sum_{i=1}^{d} \alpha_i\right) + n_e = n_r\left( 1 + d(d+1)\left(\frac{n}{d} + Mn\right)\right) + n_e.
\end{align*}
Let $\mathcal{L}^{\text{Var, Smooth}}_{n_e,n_r}$ represent the loss function~\eqref{eq:EmpLossFunctionWeak}
using Gaussian smoothing~\eqref{smoothing1} 
in its variational form~\eqref{weak11}. 
In this case, we have
$$
\xi(\mathcal{L}_{n_e,n_r}^{\text{Smooth}}) = K(5n_r+n_e)
    \qquad\text{and}\qquad
\xi\left(\mathcal{L}^{\text{Var,Smooth}}_{n_e,n_r}\right) = K(3n_r+n_e).
$$
Fix two final UAT trainable layers ($M=2$) and  let the number of qubits be three times the problem's dimension ($n=3d$). 
The number of evaluations then reduces to
\begin{equation*}
\begin{array}{rcll}
\displaystyle \xi\left(\mathcal{L}_{n_e,n_r}\right)
& = &
\displaystyle n_rd(d+1)(2(3 + 6d)^2 - 3d) + n_e 
& = \displaystyle n_r(72d^4 + 141d^3 + 87d^2 + 18d )+n_e,\\
\\
\displaystyle \xi\left(\mathcal{L}^{\text{Var}}_{n_e,n_r}\right)
& = & 
\displaystyle n_r\left( 1 + d(d+1)(3 + 6d)\right) + n_e
 & = \displaystyle n_r(6d^3 + 9d^2 + 3d + 1) + n_e.
\end{array}
\end{equation*}
While still polynomial we clearly have better scaling in the dimension for the variational formulation. 
We in particular have
$\xi(\mathcal{L}_{n_e,n_r}) > \xi\left(\mathcal{L}^{\text{Var}}_{n_e,n_r}\right)$
for all values of $d \in \mathbb{N}$.
In~\cite{he2023learning}, the authors find good experimental results using values of $K\in[256,2048]$ for classical PINNs,
so we now fix $K = 1024$.
For $n_r = n_e$, we have
\begin{equation*}
\begin{array}{rll}
\displaystyle \xi\left(\mathcal{L}^{\text{Var, Smooth}}_{n_e,n_r}\right) 
& \displaystyle <\xi\left(\mathcal{L}_{n_e,n_r}^{\text{ Smooth}}\right), & \text{for all }d\in\mathbb{N},\\
\displaystyle \xi\left(\mathcal{L}^{\text{Var, Smooth}}_{n_e,n_r}\right)
& \displaystyle < \xi\left(\mathcal{L}^{\text{Var}}_{n_e,n_r}\right), & \text{if and only if }d \geq 9,\\
\displaystyle \xi\left(\mathcal{L}_{n_e,n_r}^{\text{ Smooth}}\right)
& \displaystyle     < \xi\left(\mathcal{L}^{\text{Var}}_{n_e,n_r}\right), & \text{if and only if }d \geq 10.
    \end{array}
\end{equation*}
For the $p$-Laplace equation,
Gaussian smoothing is thus more efficient as the dimension grows. 

\begin{remark}
Using the complexity analysis in the previous sections, a similar comparison analysis is straightforward for the $p$-Laplace equation with $p=2$ as well as for the HJB and the Heat equations.
\end{remark}
%%%%%%%%%%%%%%%%%%%%%%%%%%%%%%%%%%%%%%%%%%%%%%%%%%%
\subsection{Expressive power of smoothed networks}
\label{sec:ExpreSmooth}
From Theorem~\ref{thm:lip}, 
when the output of the quantum network is Gaussian smoothed, 
the resulting PDE trial solution is $\frac{\ub}{\sigma}\sqrt{\frac{2}{\pi}}$ Lipschitz, where $\ub = \sup_{x \in \RR^d}  |u_\Theta(x)|$ with respect to the $2$-norm. 
To derive an upper bound for~$\ub$ for any parameter set it suffices to consider the range of the expectation for the Hermitian cost \eqref{fullquantnetwork} operator
\begin{equation}
    \sup_{x,\Theta}  |u_\Theta(x)|
     = \sup_{x,\Theta}  |\langle\Cc\rangle_{\Ug_{\Theta}(x) \ket{\boldsymbol{0}}}|
    \leq \sup_{\psi, \braket{\psi | \psi} = 1}\left\lvert \braket{\Cc}_{\psi}\right\rvert,
\end{equation}
using the Ising Hamiltonian~\eqref{eq:IsingHamilt}

\begin{equation}
    \sup_{\psi, \braket{\psi | \psi} = 1}\left\lvert \braket{\Cc}_{\psi}\right\rvert \leq \sup_{\psi, \braket{\psi | \psi} = 1}\left\lvert \left\langle
     \sum_{j=1}^n \left[ \widehat{\Zg}_{j \Mod{n}
}\widehat{\Zg}_{j+1 \Mod{n}} + \widehat{\Zg}_{j \Mod{n}} + \widehat{\Xg}_{j\Mod{n}} \right]\right\rangle_{\psi}\right\rvert \leq 3n,
\end{equation}
resulting in a Lipschitz constant of $\frac{3n}{\sigma}\sqrt{\frac{2}{\pi}}$ for the PDE trial solution, unlikely to be the best Lipschitz constant. For example for a single qubit it is easy to see that the expectation of~$\Zg$ will not be~$1$ when the expectation of~$\Xg$ is~$1$. 

%%%%%%%%%%%%%%%%%%%%%%%%%%%%%%%%%%%%%%%%%%%%
%\subsection{Choice of $\sigma$ in the quantum setting}
%$\sigma \geq 0.1$ proves similarly resistant to sampling %error as compared to parameter shift. Trying to get %TensorFlow Quantum working again to rerender results as %Google colab have dropped support for it. 

%%%%%%%%%%%%%%%%%%%%%%%%%%%%%%%%%%%%%%%%%%%%
\subsection{Lipschitz constant for the heat equation}
\subsubsection{Heat equation with small Lipschitz constant}
Consider the heat equation defined as
\begin{equation*}
\left\{
 \begin{array}{rll}
 \Delta  u(t,x) & = \dt u(t,x),
  & \text{for all }x \in \Bb_{0,1}, t \in [0,T),\\
  u(0,x) & = \displaystyle \frac{\|x\|^2}{2d},
  & \text{for all } x \in \Bb_{0,1},\\
 u(t,x) & = \displaystyle t + \frac{1}{2d},
  & \text{for all }x \in \partial \Bb_{0,1},  t \in [0,T).
  \end{array}
  \right.
\end{equation*}
with $\Bb_{0,1}$, the unit ball in~$\RR^d$.
The solution is $u(t,x) = t + \frac{\|x\|^2}{2d}$, with Lipschitz constant
$$
L^{\alpha}(u)
= \sup_{x \in \Bb_{0,1}, t \in [0,T) }
\left\|\nabla u (t,x)\right\|_{2}
= \sup_{x \in\Bb_{0,1}} \left\| \frac{x}{d} + 1 \right\|_{2}
= \frac{1}{d} + 1.
$$
In our formulation, $n \leq d$, 
where~$n$ is the number of qubits
and the data lies in~$\RR^d$.
Therefore, the Lipschitz constant~$L^{\alpha}(u)$ is smaller than that of the PDE trial solution, $\frac{3n}{\sigma}\sqrt{\frac{2}{\pi}}$, obtained in Section~\ref{sec:ExpreSmooth} as long as $\sigma\leq 
3n\sqrt{\frac{2}{\pi}}(1+\frac{1}{d})^{-1}$.
Good experimental results with $\sigma \leq 0.1$ were outlined in~\cite{he2023learning},
which clearly satisfies the condition for all integers~$n$ and~$d$.

\subsubsection{Heat equation with large Lipschitz constant}
For an example of PDE where this Gaussian method is intractable with quantum networks, 
consider the heat equation
\begin{equation}\label{eq:heat2}
 \Delta  u(t,x) = \dt u(t,x),
\end{equation}
for $x \in [0,1]^d$ and $t \in [0,T)$, with Dirichlet boundary conditions, 
and consider the solution given by
\begin{equation}\label{eq:heat2sol}
    u(t,x) = d^{\Oneovd} 
    \E^{-a^2\pi^2 d t}\prod_{i=1}^{d} \sin(a\pi x_i),
\end{equation}
for $x=(x_1, \ldots, x_d)$, and $\nabla_{t,x}$ the gradient with respect to both $t \text{ and } x$ we have the Lipschitz constant upper bound
\begin{align}
 L^{\alpha}(u) 
&= \sup_{x \in [0,1]^d, t \in [0,T) } \left\|\nabla_{t,x}u({t,x}) \right\|_{2} \\
&= \sup_{x \in [0,1]^d, t \in [0,T) }\bigg\lVert
-\bm{e}_1 a^2\pi^2 d u(t,x)
+ a\pi d^{\Oneovd}\E^{-a^2\pi^2 d t}\sum_{j=1}^d \bm{e}_{j+1}\cos(a\pi x_j)
\prod_{i=1,i \neq j}^{d} \sin(a\pi x_i) \bigg\rVert_2 \\
&= \sup_{x \in [0,1]^d}
\bigg\lVert
-\bm{e}_1 a^2\pi^2d u(0,x)
+ a\pi d^{\Oneovd}\sum_{j=1}^d \bm{e}_{j+1}\cos(a\pi x_j)
\prod_{i=1,i \neq j}^{d} \sin(a\pi x_i)\bigg\rVert_2\\
&\geq \bigg\lVert
-\bm{e}_1 a^2\pi^2du\left(0,\half\right)
+ a\pi d^{\Oneovd}\sum_{j=1}^d \bm{e}_{j+1}\cos\left( \frac{a\pi}{2}\right)
\prod_{i=1,i \neq j}^{d} \sin\left(\frac{a\pi}{2}\right)\bigg\rVert_2\\
&= \bigg\lVert
-\bm{e}_1 a^2\pi^2d u\left(0,\half\right)
+ a\pi d^{\Oneovd}\sum_{j=1}^d \bm{e}_{j+1}\cos\left( \frac{a\pi}{2}\right)
\ \sin\left(\frac{a\pi}{2}\right)^{d-1}\bigg\rVert_2\\
&=\sqrt{\bigg(a^2\pi^2 du\left(0,\half\right)\bigg)^2
+ 
\left(a\pi\cos\left(\frac{a\pi}{2}\right)d^{\Oneovd}d\sin\left(\frac{a\pi}{2}\right)^{d-1}\right)^2},
\end{align}
where~$\bm{e}_{j}$ denotes the unit vector in $\RR^d$ with~$1$ for component~$j$ and~$0$ otherwise.
With $d = 50$ and $a = 1$, the resulting Lipschitz constant is greater than $22360$. Comparing this to the Lipschitz restriction found earlier, if we use a value of $\sigma = 0.1$ we would need more than~$900$ qubits for the smoothed PDE trial solution.

%%%%%%%%%%%%%%%%%%%%%%%%%%%%%%%%%%%%%%%%%%%%
\section{Numerical examples}\label{sec:Numerics}
The quantum state simulation is performed using the \texttt{Yao} package for \texttt{Julia}~\cite{Luo_2020}.
We use the random quantum network introduced in Section~\ref{sec:specificnetworks} with a varying number of qubits, which we compare to the random classical network developed by Gonon~\cite{gonon2023random}: 
let~$E_1$ be $t_5(0, \Id_d)$ (multivariate t-distribution) random variable 
and~$B_1$ a Student $t$-distribution with two degrees of freedom. 
%Note the work of~\cite{gonon2023random} is in a different setting, their networks are trained using a supervised approach however the network is still the PDE trial solution.
At each training iteration we uniformly sample new points from~$\Omega$ and~$\partial\Omega$. 
We train solutions using both classical and quantum networks. 
Due to the random nature of the networks we repeat each training process five times, 
and all training statistics reported below are mean values.

%%%%%%%%%%%%%%%%%%%%%%%%%%%%%%%%%%%%%%%%%%%%
\subsection{Poisson equation}
Consider the Poisson equation~\eqref{eq:poiss} with $p=2$, $\Omega = (0,1)^2$ and
$$
f(x,y) = 26\cos(y)\cos(5x) -\frac{2y}{(1 + x)^3},
$$
so that the explicit solution reads
$$
u(x,y) = \cos(5x)\cos(y)+\frac{y}{1+x}+\sin(x+y)^2+\cos(x+y)^2.
$$
Alongside quantum and classical networks, 
we also investigate the two loss functions~$\mathcal{L}^{\text{Var}}$ and~$\mathcal{L}$. 
To demonstrate the effectiveness of the Haar random operator we also train solutions using $\bm{\Lambda} = \Id$.
We use six qubits, detailed training information and network settings are shown in Table~\ref{tab:training1}.
Final relative~$L_2$ errors of the trial solutions are shown in Table~\ref{tab:my-table} alongside other metrics. Regardless of the loss function used the full random quantum networks outperform all of the random classical networks. 
In Table~\ref{tab:my-table} we see that the performance of the classical networks improves when the number of nodes increases, notably the full random quantum network has $24$ trainable parameters yet it outperforms the classical network with more than $4$ times the number of trainable parameters. 
We also see that for both classical and quantum networks the variational formulation loss functions produces trial solutions with approximately the same final~$L_2$ relative error.

The addition of the Haar random operator~$\bm{\Lambda}$ has little impact on the training complexity since it is fixed.
However, it greatly improves the final~$L_2$ relative error, 
for example reducing the error for the variational loss from~$0.575$ to~$0.040$. 
Figure~\ref{fig:foobar11} shows the squared error $|u_\Theta -u|^2$ over the domain of~$x_1$ for $x_2 \in\{0.25,0.5,0.75\}$.
The solid blue line indicates the average value for the quantum network with the shaded blue area 
representing all error values from the five runs. 
For the classical networks we plot the network with the lowest overall error of the five. 
Figure~\ref{fig:foobar11} shows snapshots of the trial solution against the analytic solution. 
The five quantum networks display similar behaviour over the intervals when compared to each other.
\begin{figure}
\centering
\includegraphics[angle=-90,scale=0.4]{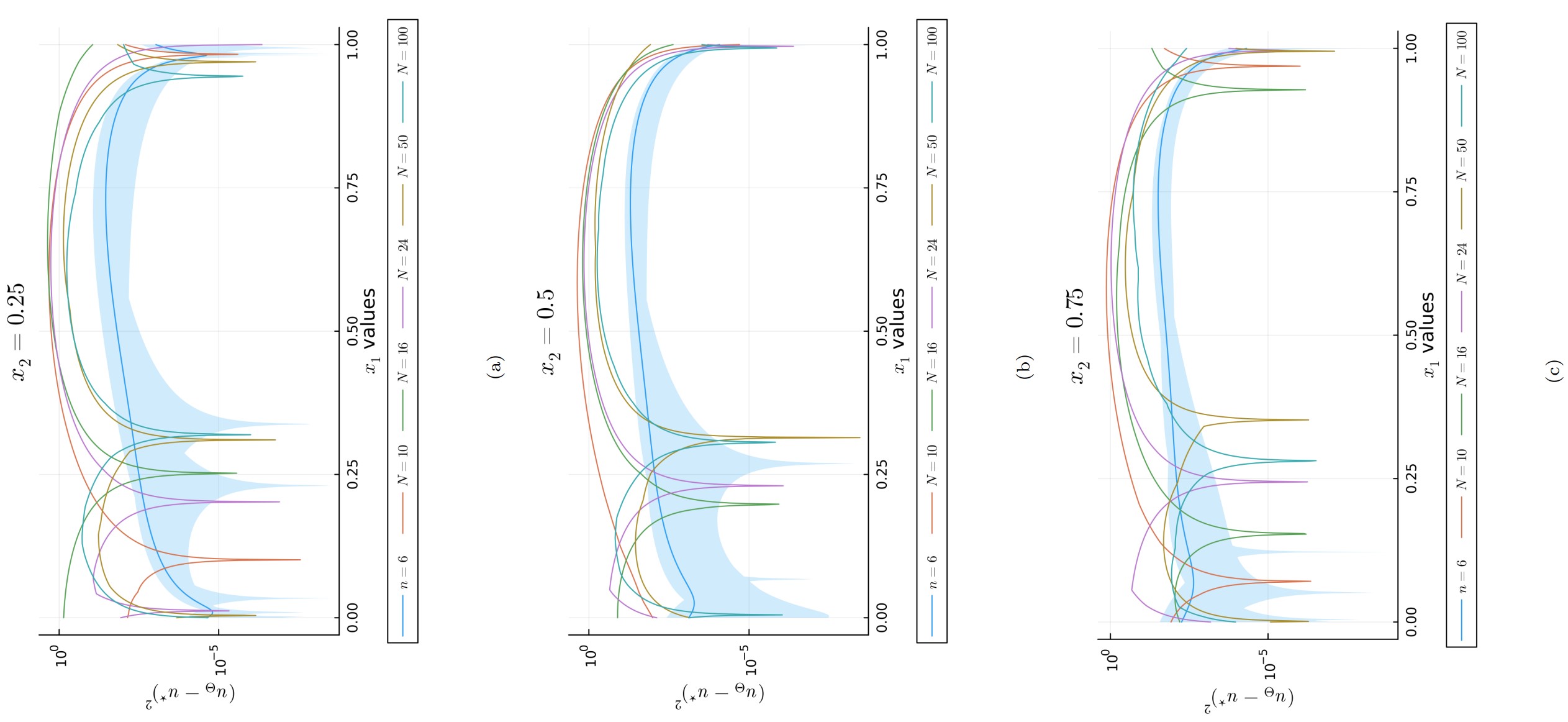}
 \caption{Squared error values $|u_\Theta(x) -u(x)|^2$  of the trial solutions for the Poisson equation in~$\RR^2$ with~$x_1$ on the horizontal axis and for $x_2 \in\{0.25,0.5,0.75\}$.
    The parameter~$N$ refers to the number of nodes in the random classical network and $n=6$ is the number of qubits in the random quantum network. 
    For the classical networks we plot the solution with the lowest overall mean squared error whereas for the quantum networks we plot the average squared error with a ribbon indicating the range of error for the five different training runs.}
        \label{fig:foobar11}
\end{figure}

\begin{figure}
\centering
\includegraphics[angle=-90,scale=0.4]{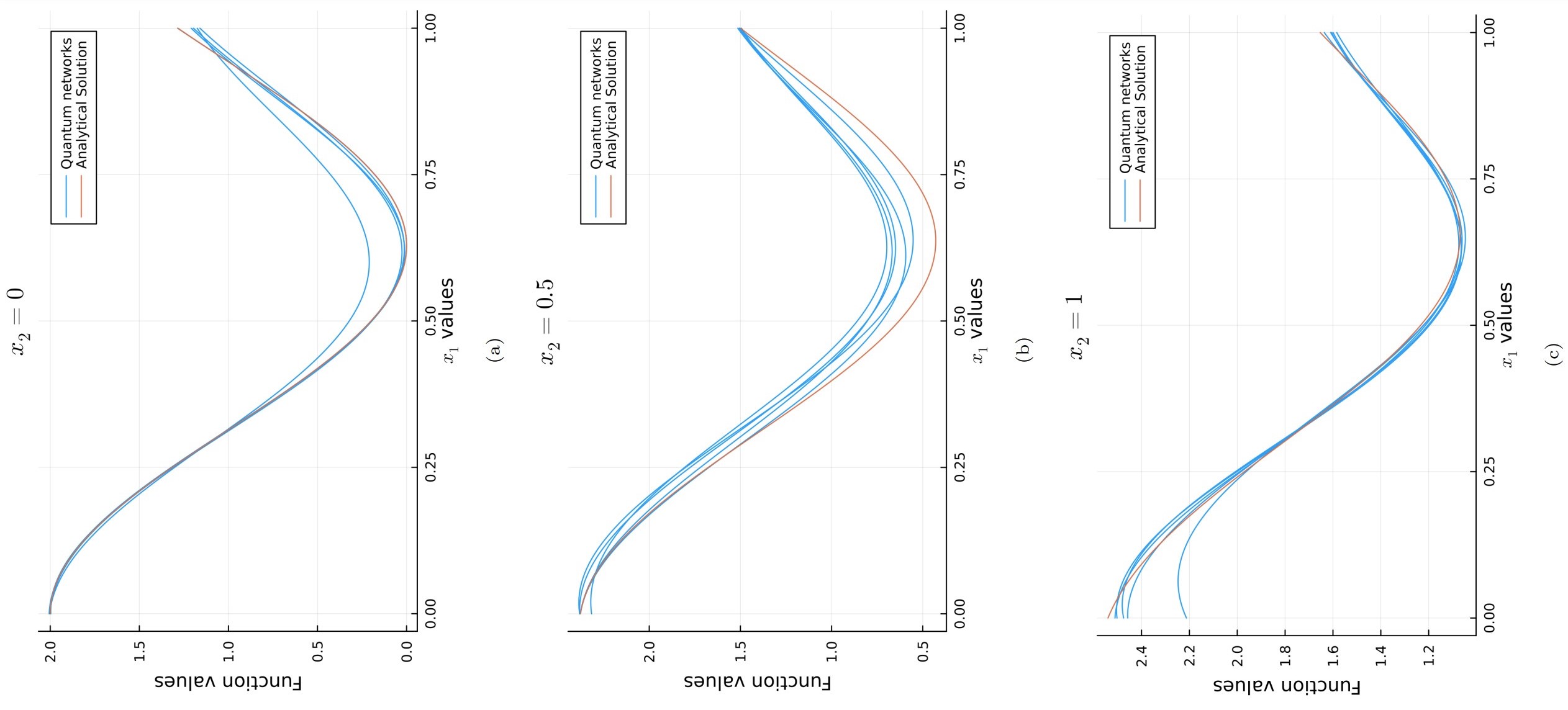}
    \caption{Results of the quantum networks for the Poisson equation in~$\RR^2$,  as a function of~$x_1$ for $x_2\in\{0,0.5,1\}$.
    We also plot the analytic solution for comparison.}
    \label{fig:foobar22}
\end{figure}

\begin{table}[]
\centering
\begin{tabular}{@{}ll|cc|@{}}
\cmidrule(l){3-4}
&  & \multicolumn{2}{c|}{$L_2$ Relative Error}\\ \cmidrule(l){3-4} 
&  & \multicolumn{1}{l|}{Using $\mathcal{L}^{\text{Var}}$} & \multicolumn{1}{l|}{Using $\mathcal{L}$} \\ \midrule
\multicolumn{1}{|c|}{\multirow{2}{*}{Quantum Networks}}   & Full random network & \multicolumn{1}{c|}{$0.040 \pm 0.041$}                 & $0.051 \pm 0.067$                        \\ \cmidrule(l){2-4} 
\multicolumn{1}{|c|}{}                                    & $\bm{\Lambda} = \Id$  & \multicolumn{1}{c|}{$0.575 \pm 0.27$}                 & $0.61 \pm 0.22$                         \\ \midrule
\multicolumn{1}{|l|}{\multirow{4}{*}{Classical Networks}} & $N = 10$            & \multicolumn{1}{c|}{$0.683 \pm 1.7$}                 & $0.799 \pm 1.30$                        \\ \cmidrule(l){2-4} 
\multicolumn{1}{|l|}{}                                    & $N = 16$            & \multicolumn{1}{c|}{$0.712 \pm 0.43$}                 & $0.616 \pm 0.59$                        \\ \cmidrule(l){2-4} 
\multicolumn{1}{|l|}{}                                    & $N = 24$           & \multicolumn{1}{c|}{$0.511 \pm 0.19$}                 & $0.537 \pm 0.13$                        \\ \cmidrule(l){2-4} 
\multicolumn{1}{|l|}{}                                    & $N = 50$           & \multicolumn{1}{c|}{$0.281 \pm 0.093$}                 & $0.200 \pm 0.081$  
\\ \cmidrule(l){2-4} 
\multicolumn{1}{|l|}{}                                    & $N = 100$           & \multicolumn{1}{c|}{$0.239 \pm 0.11$}                 & $0.156 \pm 0.074$                
\\ \bottomrule
\end{tabular}
\label{tab:my-table}
\caption{Training results for the random networks used to solve the Poisson equation with varying loss function formulations.  }
\end{table}

%%%%%%%%%%%%%%%%%%%%%%%%%%%%%%%%%%%%%%%%%%%%
\subsection{Heat equation}
We consider the heat equation~\eqref{eq:heat2} with the solution~\eqref{eq:heat2sol} and $T = 1$, $a=0.25$. We solve with $d=1,2$ using $4$ and $6$ qubits, respectively. 
Specific training settings are shown in Appendix~\ref{tab:training2} and detailed training results can be found in Table~\ref{tab:heat1} and Table~\ref{tab:heat2}.
For $d=1$, the quantum network has a lower average MSE than the average values for the classical networks. 
Specifically, the classical network with the same number of trainable parameters has an MSE an order of magnitude larger.  We also train classical networks with both more and fewer trainable parameters and see that the quantum networks outperform all the classical ones.
For $d=2$, the quantum network has~$60$ trainable parameters and outperforms all the random classical networks with less than~$70$ parameters. 
There is a large boost in final MSE when the classical random network has more than~$80$ nodes;
compared to the previous two examples, this could suggest that our quantum network lacks the expressivity needed or more training iterations are required for accurate quantum networks.
In Figure~\ref{fig:foobar1}, we plot the squared error values $|u_\Theta -u|^2$ over the domain of~$x_1$ and at values of $x_2 \in\{0.25,0.5,0.75\}$.
The solid blue line indicates the average value for the quantum network with the shaded blue area being a ribbon that covers all error values from the five runs. For the classical networks we plot the network with the lowest overall error of the five. 
In Figure~\ref{fig:foobar2} we plot snapshots of the trial solution against the analytic solution. 
\begin{figure}
\centering
\includegraphics[angle=-90,scale=0.4]{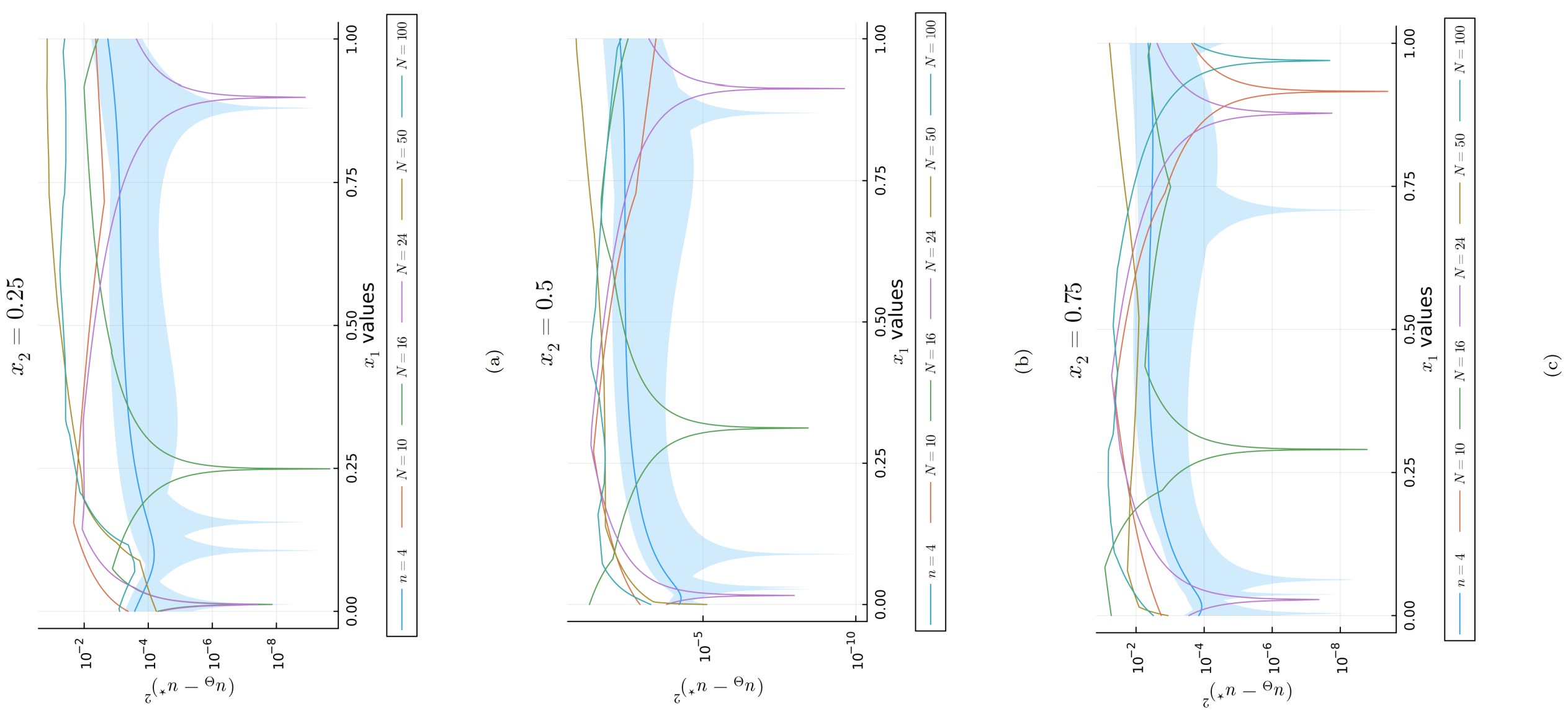}
    \caption{Squared error values $|u_\Theta(x) -u(x)|^2$  of the trial solutions for the Heat equation with $d=1$ and~$x_1$ on the horizontal axis and for $x_2 \in\{0.25,0.5,0.75\}$.
    The parameter~$N$ refers to the number of nodes in the random classical network and $n=4$ is the number of qubits in the random quantum network. 
    For the classical networks we plot the solution with the lowest overall mean squared error whereas for the quantum networks we plot the average squared error with a ribbon indicating the range of error for the five different training runs.}
    \label{fig:foobar1}
\end{figure}

\begin{figure}
\centering
\includegraphics[angle=-90,scale=0.4]{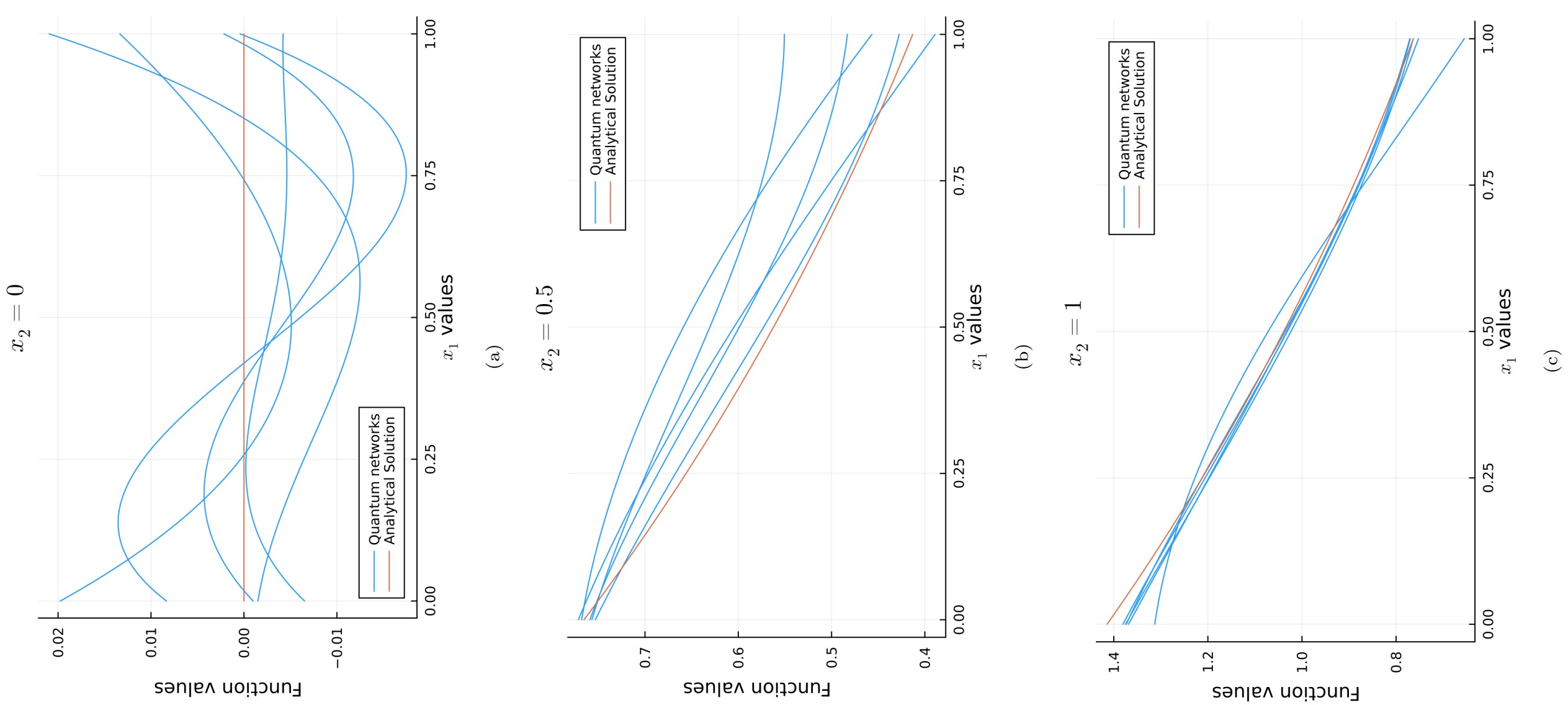}
    \caption{Solution snapshots of the quantum networks for the Heat equation with $d=1$ and~$x_1$ on the horizontal axis and $x_2 \in \{0,0.5,1\}$.
     We also plot the analytic solution for comparison.}
    \label{fig:foobar2}
\end{figure}

\begin{table}[]
\parbox{.47\linewidth}{
\centering
\begin{tabular}{@{}lc|c|@{}}
\cmidrule(l){3-3}
 & \multicolumn{1}{l|}{} & MSE Error \\ \midrule
\multicolumn{1}{|c|}{Quantum}                     & $n=4$                 & $(7.8 \pm 3.5) \textrm{E-4} $ \\ \midrule
\multicolumn{1}{|l|}{\multirow{5}{*}{Classical}} & $N=10$                & $(1 \pm 5.3) \textrm{E-2}$  \\ \cmidrule(l){2-3} 
\multicolumn{1}{|l|}{}                                    & $N = 16$              & $(8 \pm 5.8) \textrm{E-3}$  \\ \cmidrule(l){2-3} 
\multicolumn{1}{|l|}{}                                    & $N = 24$              & $(1.12 \pm 0.87) \textrm{E-2}$  \\ \cmidrule(l){2-3} 
\multicolumn{1}{|l|}{}                                    & $N = 50$              & $(4.1 \pm 0.65) \textrm{E-2}$  \\ \cmidrule(l){2-3} 
\multicolumn{1}{|l|}{}                                    & $N = 100$             & $(2.7 \pm 1.2) \textrm{E-2}$  \\ \bottomrule
\end{tabular}
\caption{Results for the random networks for the Heat equation with $d=1$.}
\label{tab:heat1}
}
\hfill
\parbox{.47\linewidth}{
\begin{tabular}{@{}lc|c|@{}}
\cmidrule(l){3-3}
& \multicolumn{1}{l|}{} & MSE Error                        \\ \midrule
\multicolumn{1}{|c|}{Quantum}                     & $n=6$                 & $0.06 \pm 0.04$ \\ \midrule
\multicolumn{1}{|l|}{\multirow{5}{*}{Classical}} & $N=30$                & $0.36 \pm 0.21$  \\ \cmidrule(l){2-3} 
\multicolumn{1}{|l|}{}                                    & $N = 40$              & $0.23 \pm 0.18 $  \\ \cmidrule(l){2-3} 
\multicolumn{1}{|l|}{}                                    & $N = 50$              & $0.17 \pm 0.05$  \\ \cmidrule(l){2-3} 
\multicolumn{1}{|l|}{}                                    & $N = 60$              & $0.09 \pm 0.11 $  \\ \cmidrule(l){2-3} 
\multicolumn{1}{|l|}{}                                    & $N = 70$             & $0.07 \pm 0.09$  \\ \cmidrule(l){2-3} 
\multicolumn{1}{|l|}{}                                    & $N = 80$             & $0.025 \pm 0.05$  \\ \cmidrule(l){2-3} 
\multicolumn{1}{|l|}{}                                    & $N = 100$             & $0.038 \pm 0.08$  \\ \bottomrule
\end{tabular}
\caption{Results for the random networks for the Heat equation with $d=2$.}
\label{tab:heat2}
}
\end{table}

%%%%%%%%%%%%%%%%%%%%%%%%%%%%%%%%%%%%%%%%%%%%
\subsection{Hamilton–Jacobi equation}
We use the specific HJB equation~\eqref{eq:HJB}
with the unique solution~\cite{Han_2018}
\begin{equation}
u(t,x) = -\mu(2\pi)^{-\frac{{d}}{2}}
\log\left(\int_{\RR^{d}}
\exp\left\{-\frac{\|y\|^2}{2} - \mu h\left(x-\sqrt{2(T-t)}y\right)\right\}\D y\right).
\end{equation}
Due to the domain of $x$ being $\RR^{d}$ we require a different form of the encoding matrix, as a result we change definition~\eqref{eq:rot} to
\begin{equation}
    \bm{A}(x) := \bigotimes_{j = 1 }^{n}
\Rg_{z}\left[\mathcal{X}_j\pi\tanh\left(x_{\mathfrak{g}_{j}}\right)\right].
\end{equation}
We use the activation function $\tanh$ due to it's performance in traditional machine learning applications. 
We solve the specific HJB equation~\eqref{eq:HJB} with $d = 2$, $\mu = 1$, $T=1$,
$h(x) = \log\left( \frac{1+\|x\|^2}{2}\right)$
and~$9$ qubits alongside~$1$ trainable layer. 
Training results are shown in Table~\ref{tab:hjb} and training settings in Appendix~\ref{tab:training3hjb}. 
We see relatively poor performance for both sets of random networks, which is due to hardware limitations.
Indeed, the MSE during training did not plateau for any of the models, showing than more training iterations should be used.
Calculating the derivatives needed for the HJB equation using a network architecture of~$9$ qubits requires much larger computational resources, 
which we leave to further study.
\begin{table}[]
\centering
\begin{tabular}{@{}lc|c|@{}}
\cmidrule(l){3-3}
                                                          & \multicolumn{1}{l|}{} & MSE Error     \\ \midrule
\multicolumn{1}{|c|}{Quantum Network}                     & $n=9$                & $0.11 \pm 0.13$ \\ \midrule
\multicolumn{1}{|l|}{\multirow{2}{*}{Classical Networks}} & $N=50$                & $0.31 \pm 0.19$  \\ \cmidrule(l){2-3} 
\multicolumn{1}{|l|}{}                                    & $N = 75$              & $0.27 \pm 0.13$  \\ \bottomrule
\end{tabular}
\caption{Training results for the random networks used to solve the given HJB equation.}
\label{tab:hjb}
\end{table}

%%%%%%%%%%%%%%%%%%%%%%%%%%%%%%%%%%%%%%%%%%%%

%%%%%%%%%%%%%%%%%%%%%%%%%%%%%%%%%%%%%%%%%%%%
\section{Conclusion}
This paper develops parameterised quantum circuits to solve widely used PDEs in any dimension.
It further provides a precise complexity study of these algorithms and compare them to their classical counterparts, highlighting their potential advantages and limitations.
\backmatter

\bmhead{Acknowledgements} 
The authors are grateful to the Department of Aeronautics at Imperial College London for supporting this work with a doctoral studentship.
SL gratefully acknowledges financial support from the EPSRC grant EP/W032643/1
and AJ that of the EPSRC grants EP/W032643/1 and EP/T032146/1.
\textit{‘For the purpose of open access, the author has applied a ‘Creative Commons Attribution (CC BY) licence to any Author Accepted Manuscript version arising’}
%%%%%%%%%%%%%%%%%%%%%%%%%%%%%%%%%%%%%%%%%%%%
\section*{Declarations}
The authors have no relevant financial or non-financial interests to disclose. The authors have no conflicts of interest to declare that are relevant to the content of this article. All authors certify that they have no affiliations with or involvement in any organization or entity with any financial interest or non-financial interest in the subject matter or materials discussed in this manuscript. The authors have no financial or proprietary interests in any material discussed in this article.

\begin{appendix}
\section{Poisson equation training details}
We use iid sample points $\left\{x^{(i)} \right\}_{i=1,\ldots, n_r}$ drawn uniformly in $(0,1)^{d}$, and $n_e$ iid sample points $\left\{\tilde{x}^{(i)}\right\}_{i=1,\ldots, n_e}$
drawn uniformly from the boundary $\partial(0,1)^{d}$. For the $L_2$ relative error statistics we use $1000$ sample points of the form $\left\{x^{(i)} \right\}_{i=1,\ldots, 1000}$.
\begin{table}[!htbp]\centering \label{tab:training1}
$$
\begin{array}{lc}
\text {Specific Training Details for Poisson equation.}\\
\hline \text { Quantum Model Configuration } & \\
\hline \text { Trainable Layers } & 1 \\
\text { Number of Qubits } & 6 \\
\text { Trainable Parameters } & 24 \\
\hline \text { Classical Model Configuration } & \\\hline \text { Nodes / Trainable Parameters } &  {10,16,24,50,100} \\
\hline \text { Hyperparameters } & \\
\hline \text { Total iterations } & 1000 \\
\text { Domain Batch Size } n_r & 128 \\
\text { Boundary Batch Size } n_e & 64 \\
\text { Boundary Weighting } \lambda_e & 1 \\
\text { Optimiser } & \text{Adam} \\
\text { Learning Rate } & \text{1E-3} \\
\text { Adam } \varepsilon & \text{1E-8} \\
\text { Adam }\left(\beta_1, \beta_2\right) & (0.9,0.999) \\
\hline
\end{array}
$$
\end{table}

\section{Heat equation training settings}
We use iid sample points $\left\{(t^{(i)}, x^{(i)}) \right\}_{i=1,\ldots, n_r}$ drawn uniformly in $[0,1]\times (0,1)^{d}$ and $n_e$ iid sample points $\left\{\tilde{x}^{(i)}\right\}_{i=1,\ldots, n_e}$
drawn uniformly from $\partial(0,1)^{d}$. For the MSE statistics we use $1000$ sample points of the form $\left\{(t^{(i)}, x^{(i)}) \right\}_{i=1,\ldots, 1000}$.
\begin{table}[!htbp]\centering \label{tab:training2}
$$
\begin{array}{lc}
\text {Specific Training Details for the Heat equation in $2$ dimensions.}\\
\hline \text { Quantum Model Configuration } & \\
\hline \text { Trainable Layers } & 1 \\
\text { Number of Qubits } & 4 \\
\text { Trainable Parameters } & 16 \\
\hline \text { Classical Model Configuration } & \\\hline \text { Nodes / Trainable Parameters } &  {10,16,24,50,100} \\
\hline \text { Hyperparameters } & \\
\hline \text { Total iterations } & 1000 \\
\text { Domain Batch Size } n_r & 128 \\
\text { Boundary Batch Size } n_e & 64 \\
\text { Boundary Weighting } \lambda_e & 500 \\
\text { Gradient Clipping } & \pm 1 \\
\text { Optimiser } & \text{Adam} \\
\text { Learning Rate } & 5E-3 \\
\text { Adam } \varepsilon & \text{1E-8} \\
\text { Adam }\left(\beta_1, \beta_2\right) & (0.9,0.999) \\
\hline
\end{array}
$$
\end{table}
\begin{table}[!htbp]\centering \label{tab:training3}
$$
\begin{array}{lc}
\text {Specific Training Details for the Heat equation in $3$ dimensions.}\\
\hline \text { Quantum Model Configuration } & \\
\hline \text { Trainable Layers } & 2 \\
\text { Number of Qubits } & 6 \\
\text { Trainable Parameters } & 60 \\
\hline \text { Classical Model Configuration } & \\\hline \text { Nodes / Trainable Parameters } &  {30,40,50,60,70,80,100} \\
\hline \text { Hyperparameters } & \\
\hline \text { Total iterations } & 2000 \\
\text { Domain Batch Size } n_r & 64 \\
\text { Boundary Batch Size } n_e & 64 \\
\text { Boundary Weighting } \lambda_e & 500 \\
\text { Gradient Clipping } & \pm 1 \\
\text { Optimiser } & \text{Adam} \\
\text { Learning Rate } & 5E-3 \\
\text { Adam } \varepsilon & \text{1E-8}\\
\text { Adam }\left(\beta_1, \beta_2\right) & (0.9,0.999) \\
\hline
\end{array}
$$
\end{table}

\section{HJB equation training settings}
We use iid sample points $\left\{(t^{(i)}, x^{(i)}) \right\}_{i=1,\ldots, n_r}$
and $\left\{\tilde{x}^{(i)}\right\}_{i=1,\ldots, n_e}$,
where $\left\{t^{(i)}\right\}_{i=1,\ldots, n_r}$
are drawn uniformly in $[0,1]$,
and 
$\left\{x^{(i)}\right\}_{i=1,\ldots, n_r}$
and 
$\left\{\tilde{x}^{(i)}\right\}_{i=1,\ldots, n_e}$
are drawn from centered normalised Gaussian distributions in~$\RR^d$. For the MSE statistics we use $1000$ sample points of the form $\left\{(t^{(i)}, x^{(i)}) \right\}_{i=1,\ldots, 1000}$.
\begin{table}[!htbp]\centering \label{tab:training3hjb}
$$
\begin{array}{lc}
\text {Specific Training Details for the HJB equation in $3$ dimensions.}\\
\hline \text { Quantum Model Configuration } & \\
\hline \text { Trainable Layers } & 1 \\
\text { Number of Qubits } & 9 \\
\text { Trainable Parameters } & 45 \\
\hline \text { Classical Model Configuration } & \\\hline \text { Nodes / Trainable Parameters } &  {50,75} \\
\hline \text { Hyperparameters } & \\
\hline \text { Total iterations } & 750 \\
\text { Domain Batch Size } n_r & 64 \\
\text { Boundary Batch Size } n_e & 64 \\
\text { Boundary Weighting } \lambda_e & 500 \\
\text { Gradient Clipping } & \pm 1 \\
\text { Optimiser } & \text{Adam} \\
\text { Learning Rate } & \text{5E-3}\\
\text { Adam } \varepsilon & \text{1E-8} \\
\text { Adam }\left(\beta_1, \beta_2\right) & (0.9,0.999) \\
\hline
\end{array}
$$
\end{table}
\end{appendix}

\newpage
\clearpage
\bibliography{sn-bibliography}

\begin{thebibliography}{32}
\providecommand{\natexlab}[1]{#1}
\providecommand{\url}[1]{{#1}}
\providecommand{\urlprefix}{URL }
\providecommand{\doi}[1]{\url{https://doi.org/#1}}
\providecommand{\eprint}[2][]{\url{#2}}
 \bibcommenthead

\bibitem[{Abbas et~al(2021)Abbas, Sutter, Zoufal, Lucchi, Figalli, and
  Woerner}]{Abbas_2021}
Abbas A, Sutter D, Zoufal C, et~al (2021) The power of quantum neural networks.
  Nature Computational Science 1(6):403--409

\bibitem[{Berahas et~al(2022)Berahas, Cao, Choromanski, and
  Scheinberg}]{berahas2021theoretical}
Berahas AS, Cao L, Choromanski K, et~al (2022) A theoretical and empirical
  comparison of gradient approximations in derivative-free optimization.
  Foundations of Computational Mathematics 22(2):507--560

\bibitem[{Bhuvaneswari et~al(2012)Bhuvaneswari, Lingeshwaran, and
  Balachandran}]{bhuvaneswari2012weak}
Bhuvaneswari V, Lingeshwaran S, Balachandran K (2012) Weak solutions for
  {$p$}-{L}aplacian equation. Adv Nonlinear Anal 1:319--334

\bibitem[{Cuomo et~al(2022)Cuomo, Di~Cola, Giampaolo, Rozza, Raissi, and
  Piccialli}]{cuomo2022scientific}
Cuomo S, Di~Cola VS, Giampaolo F, et~al (2022) Scientific machine learning
  through {P}hysics--informed neural networks: Where we are and what’s next.
  Journal of Scientific Computing 92(3):88

\bibitem[{Doum{\`e}che et~al(2023)Doum{\`e}che, Biau, and
  Boyer}]{doumeche2023convergence}
Doum{\`e}che N, Biau G, Boyer C (2023) Convergence and error analysis of
  {PINNs}. ArXiv:2305.01240

\bibitem[{Dunjko and Briegel(2018)}]{dunjko2018machine}
Dunjko V, Briegel HJ (2018) Machine learning \& artificial intelligence in the
  quantum domain: a review of recent progress. Reports on Progress in Physics
  81(7):074001

\bibitem[{Evans(2022)}]{evans2022partial}
Evans LC (2022) Partial differential equations, vol~19. American Mathematical
  Society.

\bibitem[{Glasserman(2004)}]{glasserman2004monte}
Glasserman P (2004) Monte Carlo Methods in Financial Engineering, vol~53.
  Springer

\bibitem[{Gonon(2023)}]{gonon2023random}
Gonon L (2023) Random feature neural networks learn {Black-Scholes type PDEs}
  without curse of dimensionality. Journal of Machine Learning Research
  24(189):1--51

\bibitem[{Gonon and Jacquier(2023)}]{gonon2023universal}
Gonon L, Jacquier A (2023) Universal approximation theorem and error bounds for
  quantum neural networks and quantum reservoirs. ArXiv:2307.12904

\bibitem[{Gonon et~al(2023)Gonon, Grigoryeva, and
  Ortega}]{gonon2021approximation}
Gonon L, Grigoryeva L, Ortega JP (2023) Approximation bounds for random neural
  networks and reservoir systems. The Annals of Applied Probability
  33(1):28--69

\bibitem[{Gopakumar et~al(2023)Gopakumar, Pamela, and
  Samaddar}]{gopakumar2023loss}
Gopakumar V, Pamela S, Samaddar D (2023) Loss landscape engineering via data
  regulation on {PINNs}. Machine Learning with Applications 12:100464

\bibitem[{Han et~al(2018)Han, Jentzen, and E}]{Han_2018}
Han J, Jentzen A, E W (2018) Solving high-dimensional partial differential
  equations using deep learning. Proceedings of the National Academy of
  Sciences 115(34):8505--8510

\bibitem[{He et~al(2023)He, Li, Shi, Gao, Zhang, Bian, Wang, and
  Liu}]{he2023learning}
He D, Li S, Shi W, et~al (2023) Learning {P}hysics-informed neural networks
  without stacked back-propagation. In: International Conference on AI and
  Statistics, pp 3034--3047

\bibitem[{Jacquier and Zuric(2023)}]{jacquier2023random}
Jacquier A, Zuric Z (2023) Random neural networks for rough volatility.
  ArXiv:2305.01035

\bibitem[{Krishnapriyan et~al(2021)Krishnapriyan, Gholami, Zhe, Kirby, and
  Mahoney}]{krishnapriyan2021characterizing}
Krishnapriyan A, Gholami A, Zhe S, et~al (2021) Characterizing possible failure
  modes in physics-informed neural networks. Advances in Neural Information
  Processing Systems 34:26548--26560

\bibitem[{Kyriienko et~al(2021)Kyriienko, Paine, and Elfving}]{Kyriienko_2021}
Kyriienko O, Paine AE, Elfving VE (2021) Solving nonlinear differential
  equations with differentiable quantum circuits. Physical Review A 103:052416

\bibitem[{Lee and Kang(1990)}]{LEE1990110}
Lee H, Kang IS (1990) Neural algorithm for solving differential equations.
  Journal of Computational Physics 91(1):110--131

\bibitem[{Lu et~al(2021)Lu, Meng, Mao, and Karniadakis}]{lu2021deepxde}
Lu L, Meng X, Mao Z, et~al (2021) Deep{XDE}: A deep learning library for
  solving differential equations. SIAM review 63(1):208--228

\bibitem[{Luo et~al(2020)Luo, Liu, and et~al}]{Luo_2020}
Luo XZ, Liu JG, et~al PZ (2020) Yao.jl: Extensible, efficient framework for
  quantum algorithm design. Quantum 4:341

\bibitem[{Mari et~al(2021)Mari, Bromley, and Killoran}]{mari2021estimating}
Mari A, Bromley TR, Killoran N (2021) Estimating the gradient and higher-order
  derivatives on quantum hardware. Physical Review A 103(1):012405

\bibitem[{Mattheakis et~al(2021)Mattheakis, Joy, and
  Protopapas}]{mattheakis2021unsupervised}
Mattheakis M, Joy H, Protopapas P (2021) Unsupervised reservoir computing for
  solving ordinary differential equations. ArXiv:2108.11417

\bibitem[{M{\"u}ller and Zeinhofer(2019)}]{muller2020deep}
M{\"u}ller J, Zeinhofer M (2019) Deep {R}itz revisited. ArXiv:1912.03937

\bibitem[{P{\'e}rez-Salinas et~al(2020)P{\'e}rez-Salinas, Cervera-Lierta,
  Gil-Fuster, and Latorre}]{PerezSalinas2020datareuploading}
P{\'e}rez-Salinas A, Cervera-Lierta A, Gil-Fuster E, et~al (2020) Data
  re-uploading for a universal quantum classifier. Quantum 4:226

\bibitem[{P{\'e}rez-Salinas et~al(2021{\natexlab{a}})P{\'e}rez-Salinas,
  Cruz-Martinez, Alhajri, and Carrazza}]{perez2021determining}
P{\'e}rez-Salinas A, Cruz-Martinez J, Alhajri AA, et~al (2021{\natexlab{a}})
  Determining the proton content with a quantum computer. Physical Review D
  103(3):034027

\bibitem[{P{\'e}rez-Salinas et~al(2021{\natexlab{b}})P{\'e}rez-Salinas,
  L{\'o}pez-N{\'u}{\~n}ez, Garc{\'\i}a-S{\'a}ez, Forn-D{\'\i}az, and
  Latorre}]{P_rez_Salinas_2021}
P{\'e}rez-Salinas A, L{\'o}pez-N{\'u}{\~n}ez D, Garc{\'\i}a-S{\'a}ez A, et~al
  (2021{\natexlab{b}}) One qubit as a universal approximant. Physical Review A
  104(1):012405

\bibitem[{Preskill(2023)}]{preskill2023quantum}
Preskill J (2023) Quantum computing 40 years later. In: Feynman Lectures on
  Computation. CRC Press, p 193--244

\bibitem[{Raissi et~al(2019)Raissi, Perdikaris, and
  Karniadakis}]{raissi2019physics}
Raissi M, Perdikaris P, Karniadakis GE (2019) Physics-informed neural networks:
  A deep learning framework for solving forward and inverse problems involving
  nonlinear partial differential equations. Journal of Computational physics
  378:686--707

\bibitem[{Wendel(1948)}]{wendel1948note}
Wendel JG (1948) Note on the {G}amma function. The American Mathematical
  Monthly 55(9):563--564

\bibitem[{Wenshu et~al(2022)Wenshu, Daolun, Luhang, Wen, and
  Xuliang}]{wenshu2022review}
Wenshu Z, Daolun L, Luhang S, et~al (2022) Review of neural network-based
  methods for solving partial differential equations. Chinese Journal of
  Theoretical and Applied Mechanics 54(3):543--556

\bibitem[{Yu and E(2018)}]{e2018deep}
Yu B, E W (2018) The deep {R}itz method: a deep learning-based numerical
  algorithm for solving variational problems. Communications in Mathematics and
  Statistics 6(1):1--12

\bibitem[{Zoufal et~al(2019)Zoufal, Lucchi, and Woerner}]{Zoufal_2019}
Zoufal C, Lucchi A, Woerner S (2019) Quantum generative adversarial networks
  for learning and loading random distributions. npj Quantum Information 5(1)

\end{thebibliography}

\end{document}